\numberwithin{equation}{section}
\title[Existence of Minimizers on Compact Subsets of Momentum Space]{Existence of Minimizers for Causal Variational Principles on Compact Subsets of Momentum Space in the Homogeneous Setting}
\author[C.\ Langer]{Christoph Langer \\ \\ August 2021}
\address{Fakult\"at f\"ur Mathematik \\ Universit\"at Regensburg \\ D-93040 Regensburg \\ Germany}
\email{%finster@ur.de, 
	christoph.langer@ur.de}
\newtheorem{Def}{Definition}[section]
\newtheorem{Thm}[Def]{Theorem}
\newtheorem{Prp}[Def]{Proposition}
\newtheorem{Lemma}[Def]{Lemma}
\newtheorem{Remark}[Def]{Remark}
\newtheorem{Corollary}[Def]{Corollary}
\newcommand{\Thanks}{\vspace*{.5em} \noindent \thanks}
\newcommand{\beq}{\begin{equation}}
\newcommand{\eeq}{\end{equation}}
\newcommand{\Proof}{\begin{proof}}
	\newcommand{\QED}{\end{proof} \noindent}
\newcommand{\C}{\mathbb{C}}
\newcommand{\R}{\mathbb{R}}
\newcommand{\Id}{\mbox{\rm 1 \hspace{-1.05 em} 1}}
\newcommand{\N}{\mathbb{N}}
\DeclareMathOperator{\tr}{tr}
\DeclareMathOperator{\Tr}{Tr}
\renewcommand{\L}{{\mathcal{L}}}
\newcommand{\LL}{{\text{\rm{L}}}}
\newcommand{\Sact}{{\mathcal{S}}}
\newcommand{\T}{{\mathcal{T}}}
\newcommand\B{{\mathscr{B}}}
\renewcommand{\H}{\mathscr{H}}
\DeclareMathOperator{\diag}{diag}
\newcommand{\F}{{\mathscr{F}}}
\DeclareMathOperator{\re}{Re}
\DeclareMathOperator{\im}{Im}
\DeclareMathOperator{\supp}{supp}
\newcommand{\scrM}{\mycal M}
\newcommand{\hscrM}{\,\,\hat{\!\!\scrM}}
\newcommand{\bitem}{\begin{itemize}[leftmargin=2.5em]}
	\newcommand{\eitem}{\end{itemize}}
\newcommand{\ovm}{{\mathfrak{Ovm}}}
\newcommand{\ndm}{{\mathfrak{Ndm}}}
\DeclareFontFamily{OT1}{rsfso}{}
\DeclareFontShape{OT1}{rsfso}{m}{n}{ <-7> rsfso5 <7-10> rsfso7 <10-> rsfso10}{}
\DeclareMathAlphabet{\mycal}{OT1}{rsfso}{m}{n}
\newcommand{\?}{\;\!} % entspricht Abstand von 2em, also halb so groß wie \: 
\begin{document}
	
\thispagestyle{empty} %Unterdrückt die Seitenzahl
	
\begin{abstract}
		We prove the existence of minimizers in the class of negative definite measures on compact subsets of momentum space in the homogeneous setting under several side conditions (constraints). The method is to employ Prohorov's theorem. Given a minimizing sequence of negative definite measures, we show that, under suitable side conditions, a unitarily equivalent subsequence thereof is bounded. 
		By restricting attention to compact subsets, from Prohorov's theorem  
		we deduce the existence of minimizers in the class of negative definite measures. 
\end{abstract}
	
\maketitle
	
\thispagestyle{empty} %Unterdrückt die Seitenzahl
	
\tableofcontents
	
\section{Introduction}%\label{Subsection Overview}

In the physical theory of causal fermion systems, 
spacetime and the structures therein are described by a minimizer
of the so-called causal action principle
(for an introduction to the physical background and the mathematical context, we refer the interested reader to %Section~\ref{Section Physical Background}, 
the textbook~\cite{cfs}, the survey articles~\cite{review, dice2014} as well as the web platform~\cite{link}). 
Given a causal fermion system~$(\H, \F, d\rho)$ together with a non-negative function~$\L : \F \times \F \to \R_0^+ := [0, \infty)$ (the \emph{Lagrangian}), the causal action principle is to minimize the \emph{action}~$\Sact$ defined as the double integral over the Lagrangian 
\[ \Sact (\rho) = \int_\F d\rho(x) \int_\F d\rho(y)\: \L(x,y) \]
under variations of the measure~$d\rho$ within the class of regular Borel measures on~$\F$ under suitable side conditions.
In order to work out the existence theory for minimizers,  
{\em{causal variational principles}} evolved as a mathematical generalization
of the causal action principle~\cite{continuum, jet}. The aim of the present paper is to give an alternative proof for the existence of minimizers for causal variational principles restricted to compact subsets in the \emph{homogeneous} setting. 

In order to put the present paper into the mathe\-matical context, in~\cite{pfp} it was proposed to
formulate physics by minimizing a new type of variational principle in spacetime.
The suggestion in~\cite[Section~3.5]{pfp} led to the causal action principle
in discrete spacetime, which was first analyzed mathematically in~\cite{discrete}.
A more general and systematic inquiry of causal variational principles on measure spaces was carried out
in~\cite{continuum}. In~\cite[Section~3]{continuum} the existence of minimizers for variational principles in indefinite inner product spaces is proven in the special case that the total spacetime volume as well as the number of particles therein are finite. Under the additional assumption that the kernel of the fermionic projector is \emph{homogeneous} in the sense that it only depends on the difference of two spacetime points, variational principles for {homogeneous} systems were considered in~\cite[Section 4]{continuum} in order to deal with an infinite number of particles in an infinite spacetime volume. More precisely, the main advantage in the homogeneous setting is that it allows for Fourier methods, thus giving rise to a natural correspondence between position and momentum space. As a consequence, one is led to minimize the causal action by varying in the class of negative definite measures, and the existence of minimizers on bounded subsets of momentum space is proven in~\cite[Theorem~4.2]{continuum}. The aim of this paper is to give an alternative proof of this existence result for compact subsets. In addition, the result is stated for additional side conditions (see Section~\ref{Section existence proof}) which were not considered in~\cite{continuum}.

The paper is organized as follows. In Section~\ref{Section Physical Background} we first outline some mathematical preliminaries (\S \ref{S Basic Definitions}) and afterwards recall causal variational principles in infinite spacetime volume (\S \ref{S Preliminaries}). In order to put the causal variational principles into the context of calculus of variations, in Section~\ref{Section CVP Homogeneous} we first introduce so-called operator-valued measures (\S \ref{S operator-valued measures}); afterwards, we consider variational principles on compact subsets of momentum space in the homogeneous setting (\S \ref{S variational principle}). In Section~\ref{Section existence proof}, we prove the existence of minimizers for the causal variational principle on compact subsets in the class of negative definite measures (Theorem~\ref{Theorem minimizer}). To this end we first show that, under appropriate side conditions, minimizing sequences of negative definite measures are bounded with respect to the total variation (\S \ref{S boundedness}). We then state a preparatory result which ensures the existence of weakly convergent subsequences (\S \ref{S preparatory}). This allows us to prove our main result (\S \ref{S existence}). Afterwards we show that the main result also holds in the case that a boundedness constraint is imposed (\S \ref{S boundedness constraint}). In this way, we give an alternative proof of~\cite[Theorem~4.2]{continuum} (Theorem~\ref{Theorem boundedness}). In the appendix we give a possible explanation for the side conditions under consideration (Appendix~\ref{Appendix justification}).

\section{Mathematical Preliminaries}\label{Section Physical Background}

\subsection{Mathematical Preliminaries and Notation}\label{S Basic Definitions}
To begin with, let us compile some fundamental definitions being of central relevance throughout this paper. For details we refer the interested reader to~\cite{bognar}, \cite{gohberg} and~\cite{langer}. Unless specified otherwise, we always let~$n \ge 1$ be a given integer. 

\begin{Def} 
	A mapping~$\prec . \mid . \succ \, \colon \C^n \times \C^n \to \C$ is called an \textbf{indefinite inner product} if the following conditions hold (cf.~\cite[Definition 2.1]{gohberg}):
	\begin{enumerate}[leftmargin=2em]
		\item[\rm{(i)}] $\prec y \mid \alpha x_1 + \beta x_2\succ \,= \alpha \prec y \mid x_1 \succ + \,\beta \prec y \mid x_2 \succ $ for all $x_1$, $x_2$, $y\in \C^n$, $\alpha$, $\beta \in \C$. 
		\item[\rm{(ii)}] $\prec x\mid y\succ \,= \,\overline{\prec y\mid x\succ}$ for all $x$, $y\in \C^n$. % (antisymmetry).
		\item[\rm{(iii)}] $\prec x\mid y\succ\,=0$ for all $y\in \C^n $ $\Longrightarrow$ $x=0$. % (non-degeneracy).
	\end{enumerate} 
\end{Def}

\begin{Def}%[\textbf{Indefinite inner product space}] \hfill\\
	Let $V$ be a finite-dimensional complex vector space, endowed with an indefinite inner product~${\prec . \mid . \succ}$.
	Then $(V, \prec . \mid . \succ)$ is called an \textbf{indefinite inner product space}. 
\end{Def}

As usual, by~$\LL(V)$ we denote the set of (bounded) linear operators on a complex (finite-dimensional) vector space~$V$ of dimension~$n \in \N$. The adjoint of~$A \in \LL(V)$ with respect to the Euclidean inner product~$\langle \, . \, | \, . \, \rangle$ on~$V \simeq \C^n$ is denoted by~$A^{\dagger}$. On the other hand, whenever~$(V, \prec . \mid . \succ)$ is an indefinite inner product space, unitary matrices and the adjoint~$A^{\ast}$ (with respect to~$\prec . \mid . \succ$) are defined as follows. 

\begin{Def}%[\textbf{$H$-unitary matrix}]
	Let $\prec . \mid . \succ$ be an indefinite inner product on~$V \simeq \C^n$, and let $S$ be the associated invertible hermitian matrix determined by~\cite[eq.\ (2.1.1)]{gohberg}, 
	\begin{align*}
	{\prec x \mid y \succ} = \langle S \? x \mid y \rangle \qquad \text{for all $x,y\in \C^n$} \:.
	\end{align*}
	Then for every~$A \in \LL(V)$, the adjoint of~$A$ (with respect to~$\prec . \mid . \succ$) is the unique matrix~$A^{\ast} \in \LL(V)$ which satisfies 
	\begin{align*}
	{\prec A \?x \mid y \succ} = {\prec x \mid A^{\ast} \? y \succ} \qquad \text{for all~$x,y \in V$} \:.
	\end{align*}
	A matrix~$A \in \LL(V)$ is called \textbf{self-adjoint (with respect to~$\prec . \mid . \succ$)} if and only if~$A = A^{\ast}$. 
	In a similar fashion, an operator~$U \in \LL(V)$ is said to be \textbf{unitary (with respect to~$\prec . \mid. \succ$)} if it is invertible and $U^{-1} = U^{\ast}$ (see~\cite[Section~4.1]{gohberg}). 
\end{Def}\noindent 
We remark that every non-negative matrix (with respect to~$\prec .\mid . \succ$) is self-adjoint (with respect to~$\prec . \mid . \succ$) and has a real spectrum (cf.~\cite[Theorem~5.7.2]{gohberg}). Moreover, the adjoint~$A^{\ast}$ of~$A \in \LL(V)$ satisfies the relation 
\begin{align*}
A^{\ast} = S^{-1} \: A^{\dagger} \: S 
\end{align*}
in view of~\cite[eq.~(4.1.3)]{gohberg} (where~$A^{\dagger}$ denotes the adjoint with respect to~$\langle \, . \, | \, . \, \rangle$ and~$A^{\ast}$ the adjoint with respect to~$\prec . \mid .\succ$). 
For details concerning self-adjoint operators (with respect to~$\prec . \mid . \succ$) we refer to~\cite{langer} and the textbook~\cite{bognar}.

In the remainder of this paper we will restrict attention exclusively to indefinite inner product spaces~$(V, \prec . \mid . \succ)$ with~$V \simeq \C^{2n}$ for some~$n \in \N$. 
It is convenient to work with a fixed pseudo-orthonormal basis $(\mathfrak{e}_i)_{i=1, \ldots, 2n}$ of~$V$ in which the inner product has the standard representation with a signature matrix $S$,
\begin{align}\label{(3.13)}
{\prec u \mid v \succ} = \langle u \mid Sv \rangle_{\C^{2n}} \qquad \text{with} \qquad S = \diag(\underbrace{1, \ldots, 1}_{\text{$n$ times}}, \underbrace{-1, \ldots, -1}_{\text{$n$ times}}) \:,
\end{align}
where $\langle \, . \, | \, . \, \rangle_{\C^{2n}}$ denotes the standard inner product on $\C^{2n}$. 
The signature matrix can be regarded as an operator on $V$, 
\begin{align}\label{(S)}
S = \begin{pmatrix}
\Id & 0 \\ 0 & -\Id
\end{pmatrix} \in \operatorname{Symm} V \:, 
\end{align}
where~$\operatorname{Symm} V$ denotes the set of symmetric matrices on~$V$ with respect to the spin scalar product
(also cf.~\cite[proof of Lemma 3.4]{continuum}). Without loss of generality we may assume that~$\mathfrak{e}_i = (0, \ldots, 0, 1, 0, \ldots, 0)^{\mathsf{T}}$ for all~$i= 1, \ldots, 2n$. 

In what follows, we denote Minkowski space by~$\scrM \simeq \R^4$ and momentum space by~$\hscrM \simeq \R^4$. Identifying~$\hscrM$ with Minkowski space~$\scrM$, the Minkowski inner product (of signature~$(+,-,-,-)$) can be considered as a mapping  
\begin{align*}
\langle . , . \rangle : \hscrM \times \scrM \to \R \:, \qquad (k,x) \mapsto \langle k, \xi \rangle = \eta_{\mu\nu} k^{\mu} \: \xi^{\mu} = k^0 \xi^0 - \sum_{i=1}^{3} k^{i} \: \xi^{i} 
\end{align*}
for all~${\xi = (\xi^0, \xi^1, \xi^2, \xi^3) \in \scrM}$ and~${k = (k^0, k^1, k^2, k^3) \in \hscrM}$ (with Minkowski metric~$\eta$, where we employed Einstein's summation convention, cf.~\cite[Chapter~1]{folland-qft}). 

In the remainder of this paper, let~$\hat{K} \subset \hscrM$ be a compact subset. 
By~$\B(\hat{K})$ we denote the Borel $\sigma$-algebra on~$\hat{K}$. The class of finite complex measures on~$\hat{K}$ is denoted by~$\mathbf{M}_{\C}(\hat{K})$. By~$C_c(\hscrM)$ we denote the set of continuous functions on~$\hscrM$ with compact support, whereas~$C_b(\hscrM)$ and~$C_0(\hscrM)$ indicate the sets of continuous functions on~$\hscrM$ which are bounded or vanishing at infinity, respectively. Since~$\hat{K}$ is compact, the sets~$C_c(\hat{K})$ and~$C_b(\hat{K})$ coincide.  
By~$L^1_{\textup{loc}}(\scrM)$ we denote the set of locally integrable functions on~$\scrM$ with respect to Lebesgue measure, denoted by~$d\mu$. Unless otherwise specified, we always refer to locally finite measures on the Borel $\sigma$-algebra as Borel measures in the sense of~\cite{gardner+pfeffer}. A Borel measure is said to be regular if it is inner and outer regular. Inner regular Borel measures are referred to as Radon measures~\cite{elstrodt}.

\subsection{Variational Principles in Infinite Spacetime Volume}\label{S Preliminaries} 
Before entering variational principles in infinite spacetime volume, 
let us briefly recall the concept of a Dirac sea as introduced by Paul Dirac in his paper~\cite{dirac1930theory}. In this article, he assumes that
\begin{quote}
	``(...) \emph{all the states of negative energy are occupied except perhaps a few of small velocity.} (...) \emph{Only the small departure from exact uniformity, brought about by some of the negative-energy states being unoccupied, can we hope to observe.} (...) We are therefore led to the assumption that \emph{the holes in the distribution of negative-energy electrons are the [positrons].}''
\end{quote}
Dirac made this picture precise in his paper~\cite{dirac1934discussion} by introducing a relativistic density matrix~$R(t, \vec{x}; t', \vec{x}')$ with $(t, \vec{x}), (t', \vec{x}') \in \R \times \R^3$ defined by
\begin{align*}
R(t, \vec{x}; t', \vec{x}') = \sum_{\text{$l$ occupied}} \Psi_l(t, \vec{x}) \: \overline{\Psi_l(t', \vec{x}')} \:.
\end{align*}
In analogy to Dirac's original idea, in~\cite{external} the kernel of the fermionic projector is introduced as the sum over all occupied wave functions 
\begin{align*}
P(x,y) = -  \sum_{\text{$l$ occupied}} \Psi_l(x) \: \overline{\Psi_l(y)} 
\end{align*}
for spacetime points~$x,y \in \scrM$ as outlined in~\cite{finster2011formulation}. A straightforward calculation shows that (see e.g.~\cite[\S 4.1]{finster+grotz}) the kernel of the fermionic projector takes the form
\begin{align}\label{(Dirac sea)}
P(x,y) = \int_{\hscrM} \frac{d^4k}{(2\pi)^4} \: (\slashed{k} + m) \: \delta(k^2-m^2) \: \Theta(-k^0) \: e^{-ik(x-y)} 
\end{align}
(where~$\delta$ denotes Dirac's delta distribution and~$\Theta$ is the Heaviside function).  
We refer to~$P(x,y)$ as the (unregularized) \emph{kernel of the fermionic projector of the vacuum} (cf.~\cite[eq.~(1.2.20) and eq.~(1.2.23)]{cfs} as well as~\cite[eq.~(4.1.1)]{pfp}; this object already appears in~\cite{finster1996ableitung}). We also refer to~\eqref{(Dirac sea)} as a \emph{completely filled Dirac sea}.    
The kernel of the fermionic projector \eqref{(Dirac sea)} is the starting point for the analysis in \cite[Section 4]{continuum}. In order to deal with systems containing an infinite number of particles in an infinite spacetime volume, the main simplification in~\cite{continuum} is to assume that the kernel of the fermionic projector~\eqref{(Dirac sea)} is \emph{homogeneous} in the sense that~$P(x,y)$ only depends on the difference vector~$y-x$ for all spacetime points~$x,y \in \scrM$. The underlying homogeneity assumption~$P(x,y) = P(y-x)$ for all~$x,y \in \scrM$ is referred to as ``homogeneous regularization of the vacuum'' (cf.~\cite[eq.~(4.1.2)]{pfp} and the explanations thereafter; also see~\cite[Assumption~3.3.1]{cfs}). 
Introducing~$\xi = \xi(x,y) := y-x$ for all~$x,y \in \scrM$ and
\begin{align*}
\hat{P}(k) = (\slashed{k} + m) \: \delta(k^2 - m^2) \: \Theta(-k^0) 
\end{align*}
for all~$k \in \hscrM$,  
the fermionic projector~\eqref{(Dirac sea)} can be written as a Fourier transform, 
\begin{align*}%\label{(P homogeneous)}
P(x,y) = \int_{\hscrM} \frac{d^4k}{(2\pi)^4} \: \hat{P}(k) \: e^{i\langle k, \xi \rangle}
\end{align*}
(for details concerning the Fourier transform we refer to~\cite{folland-qft}). 
In order to arrive at a measure-theoretic framework, it is convenient to regard~$\hat{P}(k) \: d^4k/(2\pi)^4$ 
as a Borel measure~$d\nu$ on~$\hscrM$, taking values in $\LL(V)$. In particular, the measure
\begin{align}\label{(4.1)}
d\nu(k) = (\slashed{k} + m) \: \delta(k^2 - m^2) \: \Theta(-k^0) \: d^4k
\end{align}
has the remarkable property that~$- d\nu$ is positive in the sense that
\begin{align}\label{(4.2')}
{\prec u \mid - \nu(\Omega) \: u \succ} \ge 0 \qquad \text{for all $u \in \C^4$}
\end{align}
with respect to the ``spin scalar product''~$\prec . \mid . \succ$ on~$\C^4$ introduced in~\S \ref{S Basic Definitions}.\footnote{
	In order to see this, we make use of the fact that the Dirac matrices anti-commute, i.e.
	\begin{align*}
	\gamma^{\mu} \gamma^{\nu} = - \gamma^{\nu} \gamma^{\mu} \qquad \text{whenever~$\mu \not= \nu$} \:. 
	\end{align*}
	Thus for every~$k \in \hscrM$ with~$k = (k^0, \vec{k})$, the operators~$p_{\pm}(\vec{k})$ given by~\cite[eq.~(2.13)]{oppiocfs} satisfy
	\begin{align*}
	p_{\pm}(\vec{k}) \: \gamma^0 = \frac{\slashed{k}+m}{2k^{0}}\: \gamma^0|_{k^0 = \pm \omega(\vec{k})} = \gamma^0 \: p_{\pm}(-\vec{k})
	\end{align*}
	with~$\omega(\vec{k}) = \sqrt{\vec{k}^2 + m^2}$. 
	Applying the fact that~$p_{\pm}(\vec{k})$ is idempotent and symmetric with respect to the Euclidean scalar product~$\langle \,.\, | \,  .\, \rangle_{\C^4}$ on~$\C^{4}$ (cf.~\cite[Proposition~2.14]{oppiocfs}), the calculation
	\begin{align*}
	{\prec u \mid (\slashed{k}+m) \: u \succ} %= 2k^0 {\prec u \mid p_{\pm}(\vec{k}) \gamma^0 \: u \succ} 
	= 2 k^0 {\prec u \mid \gamma^0 p_{\pm}(-\vec{k}) \: u \succ} = 2k^0 \langle u \mid p_{\pm}(-\vec{k})^2 \: u \rangle_{\C^4} =2k^0 \langle p_{\pm}(-\vec{k}) \: u \mid p_{\pm}(-\vec{k}) \: u \rangle_{\C^4} 
	\end{align*}
	for any~$u \in \C^4$ implies that
	\begin{align*}
	\prec \cdot \mid (\slashed{k} + m) \: \cdot \succ \quad \text{is} \quad \left\{ \begin{array}{cl}
	\text{positive semidefinite} \qquad &\text{if $k^0 > 0$} \\ [0.2em]
	\text{negative semidefinite} \qquad &\text{if $k^0 < 0$} \:. 
	\end{array} \right.
	\end{align*} 
	Introducing the set~$\Omega^- = \Omega \cap \{k^0 < 0 : k = (k^0, \vec{k}) \in \hscrM \}$ for any~$\Omega \in \B(\hscrM)$, for all~$u \in V$ we obtain
	\begin{align*}
	{\prec u \mid - \nu(\Omega) \: u \succ} &= {\prec u \mid - \int_{\Omega} (k_j \: \gamma^j + m) \: \delta\left(\langle k , k \rangle - m^2 \right) \: \Theta(-k^0) \: d^4k \: u \succ} \\
	&= \int_{\Omega^-} \underbrace{-\prec u \mid (\slashed{k} + m) u \succ}_{\text{$\ge 0$}} \: \delta\left(\langle k , k \rangle - m^2\right) \: d^4k \ge 0 \:.
	\end{align*}
	Therefore, positivity \eqref{(4.2')} is a consequence of the corresponding behavior of the operator~$(\slashed{k} + m)$.}

In order to avoid ultraviolet problems, caused by measures of the form~\eqref{(4.1)}, one is led to restrict attention to compact subsets of momentum space~\cite{continuum}. Moreover, generalizing~$(\C^4, \prec. \mid . \succ)$ to some indefinite inner product space~$(V, \prec. \mid . \succ)$ of 
dimension~$2n$, the above observations motivate the following definition (see~\cite[Definition~4.1]{continuum}).     

\begin{Def}\label{Definition 4.1}
	A vector-valued Borel measure~$d\nu$ on a compact set~$\hat{K} \subset \hscrM$ taking values in~$\LL(V)$ is called a \textbf{negative definite measure} on~$\hat{K}$ with values in~$\LL(V)$ whenever~$d\prec u \mid -\nu \, u \succ$ is a positive finite measure for all~$u \in V$.  
%	Consider a regular Borel measure~$d\nu$ on a compact set~$\hat{K} \subset \hscrM$ taking values in~$\LL(V)$ with the property that, for every~$u \in V$, the measure~$d\prec u \mid -\nu \, u \succ$ is positive and finite.
%	following properties: 
%	\begin{enumerate}[leftmargin=2em]
	%	\item[\rm{(i)}] For every $u \in V$, the measure $d\prec u \mid \nu u \succ$ is a finite real measure.
	%	\item[\rm{(ii)}] For every Borel set $\Omega \subset \hat{K}$, the operator ${- \nu(\Omega) \in \LL(V)}$ is positive \eqref{(4.2')}.
%	\end{enumerate}
%	Then $d\nu$ is called a \textbf{negative definite measure} on $\hat{K}$ with values in $\LL(V)$. 
	By~$\ndm$ we denote the class of negative definite measures on~$\hat{K}$ taking values in~$\LL(V)$.
\end{Def}

In terms of a negative definite measure~$d\nu$, the \emph{kernel of the fermionic projector} is then introduced by
\begin{align*}
P(\xi) := \int_{\hat{K}} e^{i\langle p, \xi \rangle} \: d\nu(p) \qquad \text{for all~$\xi \in \scrM$} \:. 
\end{align*}
In order to clarify the dependence on~$d\nu$, we also write~$P[\nu]$.  
For every~$\xi \in \scrM$, the \emph{closed chain} is defined by~$A(\xi) := P(\xi) \? P(-\xi)$. In order to emphasize that the closed chain depends on~$d\nu$, we also write~$A[\nu]$. According to~\cite[eq.~(3.7)]{continuum}, the spectral weight~$|A|$ of an operator~$A \in \LL(V)$ is given by the sum of the absolute values of the eigenvalues of~$A$, 
\begin{align*}
|A| = \sum_{i=1}^{2n} |\lambda_i| \:,
\end{align*}
where by~$\lambda_i$ we denote the eigenvalues of~$A$, counted with algebraic multiplicities. In analogy to~\cite[eq.~(3.8)]{continuum}, for every~$\xi \in \scrM$ the Lagrangian is introduced via
\begin{align*}
\L[A(\xi)] := |A(\xi)^2| - \frac{1}{2n} |A(\xi)|^2 \:. 
\end{align*}
Defining the \emph{action}~$\Sact$ according to~\cite[eq.~(4.5)]{continuum} by
\begin{align*}
\Sact : \ndm \to [0, + \infty] \;, \qquad \Sact(\nu) := \int_{\scrM} \L[A(\xi)] \: d\mu(\xi) \:,
\end{align*}
the \emph{causal variational principle in the homogeneous setting} is to  
\begin{align*}%\label{(cap homogeneous)}
\boxed{\qquad \phantom{\int}\text{minimize $\Sact(\nu)$ by suitably varying $d\nu$ in $\ndm$} \:. \qquad \phantom{\int}}
\end{align*}
Introducing the functional~$\T$ by
\begin{align*}
\T : \ndm \to [0, + \infty] \;, \qquad \T(\nu) := \int_{\scrM} |A(\xi)|^2 \: d\mu(\xi) \:, 
\end{align*}
the main result in~\cite[Section~4]{continuum} can be stated as follows (see~\cite[Theorem~4.2]{continuum}): 

\begin{Thm}\label{Theorem 4.2}
	Let~$(d\nu_k)_{k \in \N}$ be a sequence of negative definite measures on the bounded set~$\hat{K} \subset \hscrM$ such that the functional~$\T$ is bounded by some constant~$C > 0$, i.e.
	\begin{align*}
	\T(\nu_k) \le C \qquad \text{for all~$k \in \N$} \:.
	\end{align*}
	Then there is a subsequence~$(d\nu_{k_{\ell}})_{\ell \in \N}$ as well as a sequence of unitary transformations~$(U_{\ell})_{\ell \in \N}$ on~$V$ (with respect to~$\prec . \mid . \succ$) such that the measures~$U_{\ell} \: d\nu_{k_{\ell}} \: U_{\ell}^{-1}$ converge weakly to a negative definite measure~$d\nu$ with the properties
	\begin{align*}
	\T(\nu) \le \liminf_{k \to \infty} \T(\nu_k) \:, \qquad \Sact(\nu) \le \liminf_{k \to \infty} \Sact(\nu_k) \:. 
	\end{align*}
\end{Thm}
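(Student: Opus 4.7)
The plan is to proceed in three stages: first, use $\prec . \mid . \succ$-unitary conjugation to replace $(\nu_k)_{k \in \N}$ by a unitarily equivalent subsequence of uniformly bounded total variation; second, apply Prohorov's theorem on the compact set $\hat K$ to extract a weakly convergent sub-subsequence; third, verify that the limit lies in $\ndm$ and that $\T$ and $\Sact$ are lower semicontinuous along that subsequence.

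The crux is the boundedness step. The bound $\T(\nu_k) \le C$ is an $L^2$-type estimate for $P[\nu_k]$ on all of $\scrM$, whereas the total variation of $\nu_k$ on $\hat K$ is an $L^1$-type object on momentum space; the two are not a priori comparable. Here I would exploit that, for a negative definite $\nu$, the operator $-S\,\nu(\hat K) \in \LL(V)$ is non-negative in the Euclidean sense. Diagonalising it by a Euclidean unitary and translating via $A^{\ast} = S^{-1} A^{\dagger} S$ produces a $\prec . \mid . \succ$-unitary $U_k$ bringing $\tilde\nu_k(\hat K) := U_k \nu_k(\hat K) U_k^{-1}$ into a canonical diagonal form. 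For $\tilde\nu_k := U_k \nu_k U_k^{-1}$ the positive scalar measure $d\tilde\tau_k := -\tr_{\C^{2n}}\bigl(S\,d\tilde\nu_k\bigr)$ dominates the total variation of $\tilde\nu_k$ and has total mass $\tilde\tau_k(\hat K)$ equal to the sum of eigenvalues of $-S\,\tilde\nu_k(\hat K)$. The remaining task is to bound this quantity in terms of $\T(\nu_k) = \T(\tilde\nu_k)$, which I plan to do by comparing it with the closed chain at the origin, $A[\tilde\nu_k](0) = \tilde\nu_k(\hat K)^2$, and using continuity (with a modulus controlled by $\diam(\hat K)$) of $\xi \mapsto A[\tilde\nu_k](\xi)$ to transfer a lower bound on $|A[\tilde\nu_k](\xi)|^2$ over a small $\xi$-neighbourhood of the origin into an upper bound on $\tilde\tau_k(\hat K)$ via the assumption $\T(\tilde\nu_k) \le C$.

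Given the uniform bound, the second stage is standard: on the compact metric space $\hat K$, a norm-bounded sequence of finite Radon measures is automatically tight, so Prohorov's theorem, applied component by component in the basis $(\mathfrak{e}_i)$ and followed by a diagonal extraction, yields a subsequence $\tilde\nu_{k_\ell}$ converging weakly to an $\LL(V)$-valued Borel measure $\nu$ on $\hat K$. Weak convergence preserves non-negativity of the scalar measures $d\prec u \mid -\tilde\nu_{k_\ell} u \succ$ for every $u \in V$, so the limit satisfies $d\prec u \mid -\nu u \succ \ge 0$, and hence $d\nu \in \ndm$.

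For the third stage, fix $\xi \in \scrM$; continuity and boundedness of $p \mapsto e^{i\langle p, \xi\rangle}$ on the compact set $\hat K$ give $P[\tilde\nu_{k_\ell}](\xi) \to P[\nu](\xi)$, and continuity of matrix multiplication together with continuity of the spectral weight $|\cdot|$ yield the pointwise convergences $\L\bigl[A[\tilde\nu_{k_\ell}](\xi)\bigr] \to \L\bigl[A[\nu](\xi)\bigr]$ and $|A[\tilde\nu_{k_\ell}](\xi)|^2 \to |A[\nu](\xi)|^2$ on all of $\scrM$. Since both integrands are non-negative, Fatou's lemma delivers both claimed inequalities. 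The main obstacle I anticipate is the boundedness step: translating an $L^2$-type control on $\scrM$ into an $L^1$-type bound on $\hat K$ is delicate and is precisely what forces the passage to a unitarily equivalent representative; the compactness and continuity arguments in the remaining two stages are comparatively routine.
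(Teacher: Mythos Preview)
Your three-stage plan coincides with the paper's route: unitary conjugation to force a total-variation bound, then Prohorov on the compact set~$\hat K$ (the paper's Lemma~\ref{Lemma negative}), then Fatou for lower semicontinuity of~$\Sact$ and~$\T$ (the paper's Proposition~\ref{Proposition Fatou}). Stages~2 and~3 are essentially the arguments the paper gives.

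The gap is in stage~1, in two places. First, diagonalising the Hermitian non-negative matrix~$-S\nu_k(\hat K)$ by a \emph{Euclidean} unitary~$W$ does not give you a $\prec . \mid . \succ$-unitary via the relation~$A^{\ast}=S^{-1}A^{\dagger}S$: the groups~$\U(2n)$ and~$\U(n,n)$ are different, and a $\prec . \mid . \succ$-self-adjoint operator need not be diagonalisable by any element of~$\U(n,n)$ (it may have non-trivial Jordan structure). The paper does not attempt exact diagonalisation; it invokes the structural result~\cite[Lemma~4.4]{continuum}, which only \emph{near}-diagonalises the $\prec . \mid . \succ$-positive operator~$-\nu_k(\hat K)$ by a $\prec . \mid . \succ$-unitary up to an arbitrarily small off-diagonal remainder, and all of Lemma~\ref{Lemma spectral weight bounded} and Proposition~\ref{Proposition boundedness} are written to tolerate that remainder. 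Second, your continuity step is circular as stated: the modulus of continuity of~$\xi\mapsto P[\tilde\nu_k](\xi)$ is of order~$\diam(\hat K)\cdot d\|\tilde\nu_k\|\cdot|\xi|$, so it depends on the very total variation you want to bound, and you cannot fix a $\xi$-ball of radius independent of~$k$ on which~$|A[\tilde\nu_k](\xi)|^2$ is uniformly comparable to~$|A[\tilde\nu_k](0)|^2$. The paper's Proposition~\ref{Proposition boundedness} avoids this by first extracting, for each fixed~$\nu_k$, the pointwise bound~$|A[\nu_k](0)|^2\le C+1$ via a Lebesgue-point argument, turning the $\T$-bound into the spectral-weight bound~$|\nu_k(\hat K)|\le f=f(C,n)$; only then does Lemma~\ref{Lemma spectral weight bounded} use the near-diagonalisation to convert~$|\nu_k(\hat K)|\le f$ into a uniform total-variation bound on the conjugated sequence.
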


Theorem~\ref{Theorem 4.2} is stated as a compactness result. Applying it to a minimizing sequence yields statements similar to~\cite[Theorem~2.2 and Theorem~2.3]{continuum}, asserting that the functional~$\Sact$ attains its minimum.

Given a negative definite measure~$d\nu$, the complex measure~${d\prec u \mid \nu \: v \succ} \in \mathbf{M}_{\C}(\hat{K})$ is defined by polarization for all $u, v \in V$, 
\begin{align}\label{(polarization)}
\begin{split}
{d\prec u \mid \nu \: v \succ} &:= \frac{1}{4} \big\{ {d\prec u + v \mid \nu \: (u+v) \succ} + i \: {d\prec u + iv \mid \nu \: (u+iv) \succ}  \\
&\qquad -\:  {d\prec u - v \mid \nu \: (u-v) \succ} - i \: {d\prec u - iv \mid \nu \: (u-iv) \succ} \big\} 
\end{split}
\end{align} 
(see e.g.~\cite[eq.~(2.2.6)]{gohberg}, also cf.~\cite[Section~VIII.3]{reed+simon}). 
Following~\cite[Definition~A.16]{langerma}, 
we define integration with respect to negative definite measures as follows: 
\begin{Def}\label{Definition integral negative}
	Let~$(V, \prec . \mid . \succ)$ be an indefinite inner product space and let~$d\nu$ be a negative definite measure. Moreover, let~$f : \hat{K} \to \C$ be a bounded Borel measurable function. For all~$u, v \in V$, integration with respect to~$d\nu$ is defined by 
	\begin{align*}
	{\prec u \mid \left(\int_{\hat{K}} f(k) \: d\nu(k) \right) \: v \succ} := \int_{\hat{K}} f(k) \: d\prec u \mid \nu(k) \: v \succ \:. 
	\end{align*}
\end{Def}
A similar definition in terms of operator-valued measures is stated below (see Definition~\ref{Definition integral operator}). For a connection to spectral theory we refer to~\cite[Chapter~31]{lax}.

\section{Causal Variational Principles in the Homogeneous Setting}\label{Section CVP Homogeneous}
\subsection{Operator-Valued Measures}\label{S operator-valued measures}
In order to deal with causal variational principles in the homogeneous setting in sufficient generality, 
this subsection is devoted to put the definition of negative definite measures (see Definition~\ref{Definition 4.1}) into the context of calculus of variations. 
More precisely, as explained in~\S \ref{S Preliminaries}, the variational principle as introduced 
in~\cite[Section~4]{continuum} is to minimize the causal action~$\Sact$ in the class of negative definite measures. 
Unfortunately, in view of~\eqref{(4.2')}, the set of negative definite measures does not form a vector space, whereas in calculus of variations one usually considers functionals on a real, locally convex vector space (for details we refer to~\cite[Section~43.2]{zeidlerIII}). Hence in order to obtain a suitable framework, we first introduce operator-valued measures, which can be regarded as a generalization of negative definite measures, thus providing the basic structures required for the calculus of variations (see Lemma~\ref{Lemma OVM Banach space} below). 
Concerning the connection to vector-valued measures we refer to~\cite{diestel+uhl}. 

Then operator-valued measures on a compact subset~$\hat{K} \subset \hscrM$ with values in~$\LL(V)$ are introduced as a generalization of negative definite measures 
(see Definition~\ref{Definition 4.1}) in the following way: 

\begin{Def}\label{Definition Operator-valued measure}
	A vector-valued measure~$d\omega$ on~$\B(\hat{K})$ taking values in~$\LL(V)$ is called an \textbf{operator-valued measure} on~$\hat{K}$ with values in~$\LL(V)$ whenever~$d\prec u \mid \omega \: v \succ$ is a finite complex measure in~$\mathbf{M}_{\C}(\hat{K})$ for all~$u,v \in V$. 
\end{Def}

Whenever~$\hat{K}$ and $V$ are understood, the class of operator-valued measures on~$\hat{K}$ with values in~$\textup{L}(V)$ shall be denoted by~$\ovm$. 

In what follows, the variation of an operator-valued measure plays a central role: 

\begin{Def}\label{Definition variation}
	Given an operator-valued measure~$d\omega \in \ovm$, the \textbf{variation} of~$d\omega$, denoted by~$d|\omega|$, is defined by
	\begin{align*}
	d|\omega| := \sum_{i,j=1}^{2n} d\left|\prec \mathfrak{e}_i \mid \omega \: \mathfrak{e}_j \succ \right| \:,
	\end{align*}
	where~$d\left| \, . \, \right|$ denotes the variation of a complex measure. 
	Moreover, the \textbf{total variation} of~$d\omega$, denoted by~$d\|\omega\|$, is given by
	\begin{align}\label{(norm of a negative definite measure)}
	d\|\omega\| := d|\omega|(\hat{K}) = \sum_{i,j=1}^{2n} d\left|\prec \mathfrak{e}_i \mid \omega \: \mathfrak{e}_j \succ \right| (\hat{K}) \:. 
	\end{align}
\end{Def}\noindent
We point out that the variation as given by Definition~\ref{Definition variation} crucially depends on the pseudo-orthogonal~$(\mathfrak{e}_i)_{i=1, \ldots, 2n}$ basis of~$V$. Nevertheless, the set of operator-valued measures~$\ovm$ is a Banach space with respect to the total variation: 

\begin{Lemma}\label{Lemma OVM Banach space}
	The total variation~$d\|\cdot\|$ given by~\eqref{(norm of a negative definite measure)} 
	defines a norm on $\ovm$ in such a way that~$(\ovm, d\|\cdot\|)$ is a complex Banach space. 
	In particular,~$(\ovm, d\|\cdot\|)$ is a real, locally convex vector space. 
\end{Lemma}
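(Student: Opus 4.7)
The plan is to verify the norm axioms in a routine way and then reduce completeness to the known fact that $(\mathbf{M}_{\C}(\hat{K}), \|\cdot\|_{\text{\rm TV}})$ is a Banach space under the total variation norm.

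First I would check that $\ovm$ is a complex vector space (pointwise addition and scalar multiplication preserve countable additivity as well as the property that each scalar measure $d\prec u \mid \omega\, v\succ$ is complex), and that $d\|\cdot\|$ is a norm. Nonnegativity and positive homogeneity are immediate from the corresponding properties of the total variation of complex measures, and the triangle inequality follows componentwise from $d|\mu+\nu| \le d|\mu| + d|\nu|$. For definiteness, if $d\|\omega\| = 0$ then every scalar measure $d\prec \mathfrak{e}_i \mid \omega\, \mathfrak{e}_j \succ$ vanishes; since the basis $(\mathfrak{e}_i)$ is pseudo-orthonormal and determines the matrix of $\omega(\Omega)$ via~\eqref{(3.13)} (so that $S\omega(\Omega)$ has entries $\prec \mathfrak{e}_i \mid \omega(\Omega)\,\mathfrak{e}_j\succ$), this forces $\omega(\Omega) = 0$ for every $\Omega \in \B(\hat{K})$.

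For completeness, let $(d\omega_n)_{n \in \N}$ be a Cauchy sequence in $\ovm$. From the definition~\eqref{(norm of a negative definite measure)} one reads off
\begin{align*}
d\bigl|\prec \mathfrak{e}_i \mid (\omega_n - \omega_m)\, \mathfrak{e}_j \succ\bigr|(\hat{K}) \;\le\; d\|\omega_n - \omega_m\|
\end{align*}
for every pair $(i,j)$, so each sequence $(d\prec \mathfrak{e}_i \mid \omega_n\, \mathfrak{e}_j \succ)_{n \in \N}$ is Cauchy in the Banach space $(\mathbf{M}_{\C}(\hat{K}), \|\cdot\|_{\text{\rm TV}})$ and hence converges to some $d\mu_{ij} \in \mathbf{M}_{\C}(\hat{K})$. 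I would then define a candidate limit $d\omega$ by prescribing its matrix representation in the pseudo-orthonormal basis, namely by declaring the entries of $S\,\omega(\Omega)$ to be $\mu_{ij}(\Omega)$, equivalently $\omega(\Omega) := S^{-1} M(\Omega)$ where $M(\Omega) = (\mu_{ij}(\Omega))_{i,j}$. Countable additivity of $\omega$ follows entrywise from that of each $\mu_{ij}$, and by construction $d\prec \mathfrak{e}_i \mid \omega\, \mathfrak{e}_j \succ = d\mu_{ij}$, which by polarization~\eqref{(polarization)} and sesquilinearity guarantees that $d\prec u \mid \omega\, v \succ \in \mathbf{M}_{\C}(\hat{K})$ for all $u,v \in V$; thus $d\omega \in \ovm$. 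Finally, $d\|\omega_n - \omega\| = \sum_{i,j} \|d\prec \mathfrak{e}_i \mid (\omega_n - \omega)\, \mathfrak{e}_j\succ\|_{\text{\rm TV}} \to 0$ by construction.

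The ``locally convex'' addendum then comes for free: a normed complex vector space is in particular a normed real vector space when scalars are restricted to $\R$, and every normed space is locally convex (open balls are convex and form a neighbourhood basis of $0$). The main obstacle, such as it is, is not difficulty but bookkeeping: one must be careful that the basis-dependent definition of $d|\omega|$ in Definition~\ref{Definition variation} still dominates each scalar measure $d\prec u \mid \omega\, v\succ$ uniformly (via the expansion of $u,v$ in the $\mathfrak{e}_i$) so that the componentwise reconstruction above indeed recovers the full operator-valued limit in the stated norm.
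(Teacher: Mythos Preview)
Your proposal is correct and follows essentially the same route as the paper: reduce completeness to the componentwise Banach-space property of $\mathbf{M}_{\C}(\hat{K})$, reconstruct the limit operator-valued measure from the matrix of limit scalar measures via the signature matrix (your $\omega(\Omega)=S^{-1}M(\Omega)$ is exactly the explicit sign-pattern matrix the paper writes out), and deduce local convexity from the norm. You are slightly more explicit than the paper in verifying the norm axioms and in recording that $d\|\omega_n-\omega\|\to 0$, but there is no substantive difference in method.
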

\begin{proof}
	For the first part of the statement see the proof of~\cite[Corollary~5.3]{langerma}. 
	In order to show that~$\ovm$ is a Banach space, 
	let us consider a Cauchy sequence of operator-valued measures~$(d\omega_k)_{k \in \N}$ with respect to the norm~\eqref{(norm of a negative definite measure)}, that is, $d\|\omega_k - \omega_m\| \to 0$ as~$k,m \to \infty$. Our task is to prove that its limit, denoted by~$d\omega$, exists and that~$d\omega$ is contained in~$\ovm$. Assuming that~$(\mathfrak{e}_i)_{i=1, \ldots, 2n}$ is a pseudo-orthonormal basis of~$V$ satisfying~\eqref{(3.13)}, 
	from~\eqref{(norm of a negative definite measure)} we deduce that 
	\begin{align*}
	\lim_{k,m \to \infty} d\| {\prec \mathfrak{e}_i \mid \left(\omega_k - \omega_m \right) \mathfrak{e}_j \succ} \| = 0 \qquad \text{for all $i, j = 1, \ldots, 2n$} \:.
	\end{align*}
	Consequently, each sequence~$({d \prec \mathfrak{e}_i \mid \omega_k \: \mathfrak{e}_j \succ})_{k \in \N}$ is a Cauchy sequence of complex measures in~$\mathbf{M}_{\C}(\hat{K})$ for all $i,j \in \{1, \ldots, 2n \}$. Since~$\mathbf{M}_{\C}(\hat{K})$ is a complex Banach space with respect to the total variation~$d\|\cdot\|$ in virtue of~\cite[Aufgabe~VII.1.7]{elstrodt}, there is a complex measure~$d\omega_{i,j} \in \mathbf{M}_{\C}(\hat{K})$, being the unique limit of~$({d \prec \mathfrak{e}_i \mid \omega_k \: \mathfrak{e}_j \succ})_{k \in \N}$ for all $i,j \in \{1, \ldots, 2n \}$.
	
	Next, for all $i,j \in \{1, \ldots, 2n \}$, the complex measures $d\omega_{i,j}$ in~$\mathbf{M}_{\C}(\hat{K})$ give rise to an operator-valued measure $d\omega$ on~$\hat{K}$ with values in~$\LL(V)$ in such a way that, for all~$i,j = 1, \ldots, 2n$, we are given~${d \prec \mathfrak{e}_i \mid \omega \: \mathfrak{e}_j \succ} = d\omega_{i,j}$. 
	More precisely, defining the operator~$\omega(\Omega) \in \LL(V)$ for any~$\Omega \in \B(\hat{K})$ by
	\begin{align*}%\label{(omega)}
	\omega(\Omega) := \begin{pmatrix}
	&\omega_{1,1}(\Omega) &\cdots &\omega_{1, 2n}(\Omega) \\
	&\vdots & \ddots&\vdots \\
	&\omega_{n,1}(\Omega) &\cdots &\omega_{n,2n}(\Omega) \\
	&-\omega_{n+1,1}(\Omega) &\cdots &-\omega_{n+1,2n}(\Omega) \\
	&\vdots & \ddots&\vdots\\
	&-\omega_{1,2n}(\Omega) & \cdots &-\omega_{2n,2n}(\Omega)
	\end{pmatrix} \in \LL(V) \:,
	\end{align*}
	we obtain a mapping $d\omega \colon \B(\hat{K}) \to \LL(V)$ such that ${d\prec \mathfrak{e}_i \mid \omega \: \mathfrak{e}_j \succ} = d\omega_{i,j} \in \mathbf{M}_{\C}(\hat{K})$ for all $i,j \in \{1, \ldots, 2n \}$. Since $(\mathfrak{e}_i)_{i=1, \ldots, 2n}$ is a basis of $V$, for any $\Omega \in \B(\hat{K})$ and arbitrary elements $u = \sum_{i=1}^{2n} \alpha_i \: \mathfrak{e}_i$, $v = \sum_{j=1}^{2n} \beta_j \: \mathfrak{e}_j \in V$ we arrive at
	\begin{align*}
	{\prec u \mid \omega(\Omega) \: v \succ} = \sum_{i,j=1}^{2n} \overline{\alpha}_i \: \beta_j \: {\prec \mathfrak{e}_i \mid \omega(\Omega) \: \mathfrak{e}_j \succ} = \sum_{i,j = 1}^{2n} \overline{\alpha_i}\: \beta_j \: \omega_{i,j}(\Omega) \:.
	\end{align*}
	The fact that~$\mathbf{M}_{\C}(\hat{K})$ is a complex Banach space implies that
	\begin{align*}
	{d \prec u \mid \omega \: v \succ} = \sum_{i,j = 1}^{2n} \overline{\alpha_i} \: \beta_j \: d\omega_{i,j} \in \mathbf{M}_{\C}(\hat{K}) \qquad \text{for all $u,w \in V$} \:. 
	\end{align*}
	This shows that~$d\omega \in \ovm$ is an operator-valued measure in view of Definition~\ref{Definition Operator-valued measure}. Thus~$(\ovm, d\|\cdot\|)$ is a complex Banach space with respect to the norm~$d\|\cdot\|$ defined by~\eqref{(norm of a negative definite measure)}. Since each norm induces a corresponding Fréchet metric,~$(\ovm, d\|\cdot\|)$ can be regarded as a metric space. In particular, each complex vector space is a real one, and each Banach space is locally convex. This completes the proof. 
\end{proof}

\begin{Remark}
	The set of negative definite measures~$\ndm$ clearly is a subset of the vector space $\ovm$. 
	However, $\ndm$ itself is \emph{not} a vector space (see~\cite[Remark 5.6]{langerma}), but a \emph{cone}, i.e.~a closed subset under multiplication with positive real numbers. 
\end{Remark}

Next, let us introduce the support of operator-valued measures as follows: 

\begin{Def} 
	We define the \textbf{support} of an operator-valued measure~$d\omega$ in~$\ovm$ as the support of its variation measure~$d|\omega|$, i.e.
	\begin{align*}
	\supp d\omega := \supp d|\omega| = \hat{K} \setminus \bigcup \left\{U \subset \hat{K} : \text{$U$ open and~$d|\omega|(U) = 0$} \right\} \:. 
	\end{align*}
\end{Def}\noindent
Since~$d|\omega|$ is a locally finite measure on a locally compact Polish space, 
we conclude that~$d|\omega|$ is regular and has support,~$d|\omega|(\hat{K} \setminus \supp d|\omega|) = 0$. 

In a similar fashion, following~\cite[Definition~7.1.5]{bogachev}, an operator-valued measure~$d\omega$ is called \emph{regular} if and only if~$d|\omega|$ is regular. Moreover, the measure~$d\omega$ is said to be \emph{tight} if for every~$\varepsilon > 0$ there is a compact set~$K_{\varepsilon} \subset \hat{K}$ such that~$d|\omega|(\hat{K} \setminus K_{\varepsilon}) < \varepsilon$ (cf.~\cite[Definition~7.1.4]{bogachev}). Clearly, whenever~$\hat{K} \subset \hscrM$ is compact, every operator-valued measure on~$\hat{K}$ is tight. 

\begin{Def}\label{Definition integral operator}
	In analogy to negative definite measures (see Definition~\ref{Definition integral negative}), for any bounded Borel measurable function~$f : \hat{K} \to \C$ we define integration with respect to operator-valued measures~$d\omega$ by
	\begin{align*}
	{\prec u \mid \left(\int_{\hat{K}} f(k) \: d\omega(k) \right) \: v \succ} := \int_{\hat{K}} f(k) \: d\prec u \mid \omega(k) \: v \succ \qquad \text{for all~$u, v \in V$} \:. 
	\end{align*}
\end{Def}

Let us finally state the definition of weak convergence of operator-valued measures, which will be required later on (see \S \ref{S existence} below). 

\begin{Def}\label{Definition weak convergence}
	We shall say that a sequence of operator-valued measures~$(d\omega_k)_{k \in \N}$ in~$\ovm$ \textbf{converges weakly} to some operator-valued measure~$d\omega$ if and only if
	\begin{align*}
	\lim_{k \to \infty} \int_{\hat{K}} f \: d\: {\prec u \mid \omega_k \: v \succ} = \int_{\hat{K}} f \: d\prec u \mid \omega \: v \succ \qquad \text{for all~$u, v \in V$ and~$f \in C_b(\hat{K})$} \:. 
	\end{align*}
	We write symbolically~$d\omega_k \rightharpoonup d\omega$. 
\end{Def}

Whenever~$d\nu \in \ndm$ is a negative definite measure, we recall that, for all~$u,v \in V$, the complex measure~${d\prec u \mid \nu \: v \succ}$ in~$\mathbf{M}_{\C}(\hat{K})$ is defined by polarization~\eqref{(polarization)}. Thus a sequence of negative definite measures~$(d\nu)_{k \in \N}$ converges weakly to some negative definite measure~$d\nu \in \ndm$ if and only if 
\begin{align*}
\lim_{k \to \infty} \int_{\hat{K}} f \: d\: {\prec u \mid \nu_k \: u \succ} = \int_{\hat{K}} f \: d\prec u \mid \nu \: u \succ \qquad \text{for all~$u \in V$ and~$f \in C_b(\hat{K})$} \:. 
\end{align*}
By polarization~\eqref{(polarization)} we then conclude that
\begin{align*}
\lim_{k \to \infty} \int_{\hat{K}} f \: d\: {\prec u \mid \nu_k \: v \succ} = \int_{\hat{K}} f \: d\prec u \mid \nu \: v \succ \qquad \text{for all~$u, v \in V$ and~$f \in C_b(\hat{K})$} 
\end{align*}
in accordance with Definition~\ref{Definition weak convergence}. 

Note that, with the very same reasoning, the definitions and results stated in this section can be generalized to operator-valued measures on whole momentum space.

\subsection{Causal Variational Principles on Compact Subsets}\label{S variational principle}
After these technical preliminaries, let us now return to causal variational principles in the homogeneous setting.  
Motivated by~\eqref{(Dirac sea)}, the fermionic projector~$P(x,y)$ in the homogeneous setting takes the form
\begin{align*}%\label{(4.1.1)}
P(x,y) = \int_{\hat{K}} e^{ik(y-x)} \: d\nu(k) 
\end{align*}
for all~$x, y \in \scrM$, 
where the measure~$d\nu$ is given by~\eqref{(4.1)}. 
Generalizing~$d\nu$ according to~\S \ref{S Preliminaries} and~$\S \ref{S operator-valued measures}$ to operator-valued measures, 
for a given operator-valued measure~$d\omega$ on $\hat{K}$ with values in $\textup{L}(V)$ and all~$x,y \in \scrM$ we introduce the \emph{kernel of the fermionic projector} by 
\begin{align*}
P(x,y) : V \to V, \qquad P(x,y):= \int_{\hat{K}} {e}^{ik(y-x)} \: {d}\omega(k) \:.
\end{align*}
In order to emphasize the dependence on the operator-valued measure $d\omega$, we also write~$P[\omega](x,y)$. 
As~$P(x,y)$ is supposed to be \emph{homogeneous}, 
only the difference of two spacetime points $x,y \in \scrM$ matters; denoting the difference vector by~$\xi = y-x \in \scrM$, the kernel of the fermionic projector reads 
\begin{align}\label{(fermionic projector)}
P(\xi) : V \to V, \qquad P(\xi) = \int_{\hat{K}} {e}^{ik\xi}\: {d}\omega(k) \:.
\end{align} 
%Defining the spectral weight as the sum of the absolute values of these eigenvalues gives rise to a Lagrangian, and integration yields the corresponding causal action. 
The first step in order to 
set up the variational principle is
%obtain suitable endomorphisms is 
to form the \emph{closed chain}, which (as motivated by \cite[\S 3.5]{pfp}) for any~$\xi \in \scrM$ is defined as the mapping
\begin{align*}%\label{(closed chain)}
A(\xi): V \to V, \qquad A(\xi):= P(\xi)\,P(-\xi) \:.
\end{align*}
We also write $A[\omega](\xi)$ in order to clarify the dependence of the closed chain on the operator-valued measure $d\omega$. Next, given a linear operator~$A : V \to V$, we define the \emph{spectral weight} by 
\begin{align*}
\left|A\right|:= \sum_{i=1}^{2n} \left|\lambda_i\right| \:, 
\end{align*}
where by~$(\lambda_i)_{i=1,\ldots,2n}$ we denote the eigenvalues of the operator~$A$, counted with algebraic multiplicities. 
%Jeder Endomorphismus ist als Homomorphismus bereits linear. 
In this way, the spectral weight furnishes a connection between endomorphisms and scalar functionals. 

In order to set up a real-valued variational principle on the set of operator-valued measures, for every~$d\omega \in \ovm$ we introduce the \emph{Lagrangian}
\begin{align*}%\label{(Lagrangian)}
\L[\omega] : \scrM \to \R_0^+ \:, \qquad \L [{\omega}](\xi) := \left|A[{\omega}](\xi)^2\right|-\frac{1}{2n} \left|A[{\omega}](\xi)\right|^2 \:.
\end{align*}
Defining the \emph{causal action}~$\Sact :\ovm \to \R_0^+ \cup \{+ \infty \}$ by
\begin{align*}
\Sact(\omega) := \int_{\scrM} \L [{\omega}](\xi) \: {d}\mu (\xi) \:,
\end{align*}
the \emph{causal variational principle in the homogeneous setting} is to  
\begin{align}\label{(causal action principle homogeneous)}
\boxed{\qquad \phantom{\int}\text{minimize $\Sact(\nu)$ by suitably varying $d\nu$ in $\ndm$} \:. \qquad \phantom{\int}}
\end{align}

In order to exclude trivial minimizers, we impose the \emph{trace constraint}
\begin{align}\label{(c)}
\Tr_V \big(\nu(\hat{K}) \big) = c
\end{align}
for some~$c > 0$. Additionally, 
for~$f > 0$ 
we shall either introduce the constraint 
\begin{align}\label{(f)}
\Tr_V \big(-S\nu(\hat{K}) \big) \le f
\end{align}
(where~$S$ denotes the signature matrix~\eqref{(S)}) 
or the side condition
\begin{align}\label{(swf)}
|\nu(\hat{K})| \le f \:.
\end{align}
A motivation for the constraints~\eqref{(c)}--\eqref{(swf)} can be found in Appendix~\ref{Appendix justification}. For the connection to the boundedness constraint as considered in~\cite[Section~4]{continuum} we refer to~\S \ref{S boundedness constraint} below. 

\begin{Def}\label{Def cvp} 
	Given a subset~$N \subset \ndm$, 
	the \textbf{causal variational principle in the homogeneous setting} is to  
	\begin{align}\label{(cvp homogeneous)}
	\text{minimize $\Sact(\nu)$ by varying $d\nu$ in $N \subset \ndm$} \:. 
	\end{align}  
\end{Def}
Concerning the side conditions~\eqref{(c)}--\eqref{(swf)}, the subset~$N$ takes either the form
\begin{align*}
N &= \left\{ d\nu \in \ndm : \text{$d\nu$ satisfies conditions~\eqref{(c)} and~\eqref{(f)}} \right\} \qquad \text{or} \\ 
N &= \left\{ d\nu \in \ndm : \text{$d\nu$ satisfies conditions~\eqref{(c)} and~\eqref{(swf)}} \right\} \:. 
\end{align*}

In agreement with~\cite[Definition~43.4]{zeidlerIII}, we define a minimizer for~$\Sact$ as follows: 
\begin{Def}\label{Definition minimizer}
	A negative definite measure~$d\nu \in N$ is said to be a \textbf{minimizer} for the causal variational principle~\eqref{(cvp homogeneous)} if and only if  
	\begin{align*}
	\Sact(\tilde{\nu}) \ge \Sact(\nu) \qquad \text{for all~$d\tilde{\nu} \in N$} \:. 
	\end{align*}
\end{Def}
For further details concerning the calculus of variations we refer to~\cite[Chapter~43]{zeidlerIII}.

\section{Existence of Minimizers on Compact Subsets}\label{Section existence proof}
This section is devoted to developing the existence theory for minimizers of the causal action principle~\eqref{(causal action principle homogeneous)} for given $c, f > 0$ either with respect to the constraints
\begin{align}\label{(constraints)}
\Tr_V \big(\nu(\hat{K}) \big) = c \qquad \text{and} \qquad \Tr_V\big(-S\nu(\hat{K}) \big) \le f 
\end{align}
or with respect to the side conditions 
\begin{align}\label{(constraints spectral)}
\Tr_V \big(\nu(\hat{K}) \big) = c \qquad \text{and} \qquad |\nu(\hat{K})| \le f \:.
\end{align}

The main result of this section can be stated  as follows:

\begin{Thm}\label{Theorem minimizer}
	Let $(d\nu^{(j)})_{j \in \N}$ be a minimizing sequence of negative definite measures in~$\ndm$ of the causal variational principle~\eqref{(causal action principle homogeneous)} with respect to the constraints~\eqref{(constraints)} or~\eqref{(constraints spectral)}, respectively. Then there exists a sequence of unitary operators~$(U_j)_{j \in \N}$ on $V$ (with respect to $\prec . \mid . \succ$) and a subsequence~$(d\nu^{(j_k)})_{k \in \N}$ such that~$(U_{j_k} \: d\nu^{(j_k)} \: U_{j_k}^{-1})_{k \in \N}$ converges weakly to some non-trivial negative definite measure~$d\nu \not= 0$. Moreover, 
	\begin{align*}
	\Sact(\nu) \le \liminf_{k \to \infty} \Sact(\nu^{(j_k)}) \:,
	\end{align*}
	and the limit measure~$d\nu \in \ndm$ satisfies the side conditions
	\begin{align}\label{(limiting conditions)}
	\Tr_V \big(\nu(\hat{K}) \big) = c \qquad \text{and} \qquad \Tr_V\big(-S\nu(\hat{K}) \big) \le f 
	\end{align}
	or 
	\begin{align}\label{(limiting conditions spectral)}
	\Tr_V \big(\nu(\hat{K}) \big) = c \qquad \text{and} \qquad |\nu(\hat{K})| \le f \:, 
	\end{align}
	respectively (with positive constants~$c,f > 0$).  
	In particular, the limit measure~$d\nu$ is a non-trivial minimizer of the causal variational principle~\eqref{(causal action principle homogeneous)} with respect to the side conditions~\eqref{(constraints)} or~\eqref{(constraints spectral)}, respectively. A fortiori, the above statements remain true in case that ``$\le$'' in~\eqref{(constraints)}, \eqref{(constraints spectral)} and~\eqref{(limiting conditions)}, \eqref{(limiting conditions spectral)} is replaced by ``$=$''. 
\end{Thm}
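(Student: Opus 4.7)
The plan is to run the direct method of the calculus of variations in the Banach space $(\ovm,d\|\cdot\|)$ of operator-valued measures, and to exploit the fact that the functionals $\Sact$, $\Tr_V(\nu(\hat K))$ and $|\nu(\hat K)|$ are all invariant under the $\prec.\mid.\succ$-unitary similarity action $\nu\mapsto U\nu U^{-1}$, while the total variation $d\|\cdot\|$ is not. Thus the unitaries $U_j$ play the role of a gauge, chosen to tame the Euclidean size of $\nu^{(j)}$ before one extracts a weakly convergent subsequence.

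The main obstacle is Step~1, the uniform boundedness of $d\|U_j\,\nu^{(j)}\,U_j^{-1}\|$. In the case~\eqref{(constraints)}, the measure $-S\,d\nu^{(j)}$ is Euclidean positive-semidefinite-valued (this is exactly Definition~\ref{Definition 4.1} combined with $\prec u\mid v\succ=\langle u\mid Sv\rangle$), so a componentwise matrix Cauchy--Schwarz yields
$$d\lvert(-S\nu^{(j)})_{i\ell}\rvert\le\tfrac{1}{2}\bigl(d(-S\nu^{(j)})_{ii}+d(-S\nu^{(j)})_{\ell\ell}\bigr);$$
summing over $i,\ell$ and invoking $\Tr_V(-S\nu^{(j)}(\hat K))\le f$ then bounds $d\|\nu^{(j)}\|$ by a multiple of $f$, and no nontrivial unitary is required ($U_j=\Id$). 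Under~\eqref{(constraints spectral)}, however, only the $\prec.\mid.\succ$-spectrum of $\nu^{(j)}(\hat K)$ is controlled by $|\nu^{(j)}(\hat K)|\le f$, and since this operator need not be $\prec.\mid.\succ$-diagonalizable the Euclidean norm of $\nu^{(j)}(\hat K)$ may diverge along the minimizing sequence; here I would use $U_j$ to bring the $\prec.\mid.\succ$-self-adjoint operator $\nu^{(j)}(\hat K)$ into a canonical block form whose Euclidean size is dominated by $|\nu^{(j)}(\hat K)|$ and $|c|$, and then reapply the Cauchy--Schwarz estimate above to the transformed measure $-S\,d\tilde\nu^{(j)}$.

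Once $d\|\tilde\nu^{(j)}\|\le M$ uniformly, each of the $(2n)^2$ matrix entries $d\prec\mathfrak{e}_i\mid\tilde\nu^{(j)}\,\mathfrak{e}_\ell\succ$ lies in a norm-bounded subset of $\mathbf{M}_{\C}(\hat K)$; since $\hat K$ is compact, Prohorov's theorem applied componentwise (with a diagonal subsequence over the $4n^2$ entries) produces a subsequence $(\tilde\nu^{(j_k)})$ along which every entry converges weakly, and assembling the limit entries as in the proof of Lemma~\ref{Lemma OVM Banach space} defines an operator-valued measure $d\nu\in\ovm$ with $\tilde\nu^{(j_k)}\rightharpoonup\nu$ in the sense of Definition~\ref{Definition weak convergence}. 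Negative definiteness passes to the limit because, for each fixed $u\in V$, the positive measures $d\prec u\mid -\tilde\nu^{(j_k)}u\succ$ converge weakly to a positive measure. Evaluating weak convergence against the constant test function $1\in C_b(\hat K)$, and using the similarity invariance of $\Tr_V$ and of the spectral weight $|\cdot|$, transfers the side conditions~\eqref{(constraints)}, respectively~\eqref{(constraints spectral)}, to the limit (with lower semicontinuity producing the inequality); non-triviality $\nu\neq 0$ is then automatic from $\Tr_V(\nu(\hat K))=c>0$. Finally, weak convergence against the continuous bounded functions $k\mapsto e^{\pm i\langle k,\xi\rangle}$ gives pointwise convergence $P[\tilde\nu^{(j_k)}](\xi)\to P[\nu](\xi)$ for every $\xi\in\scrM$, hence pointwise convergence of $A[\tilde\nu^{(j_k)}](\xi)$ and, by continuity of the eigenvalues in the matrix entries, of $\L[\tilde\nu^{(j_k)}](\xi)$; Fatou's lemma together with the unitary invariance of $\Sact$ then yields $\Sact(\nu)\le\liminf_{k\to\infty}\Sact(\tilde\nu^{(j_k)})=\liminf_{k\to\infty}\Sact(\nu^{(j_k)})$, which is the desired minimizing property.
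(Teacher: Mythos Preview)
Your proposal is correct and follows essentially the same route as the paper: uniform total-variation bounds via the matrix Cauchy--Schwarz/Young estimate (with $U_j=\Id$ under~\eqref{(constraints)} and with an approximate $\prec.\mid.\succ$-diagonalization of $\nu^{(j)}(\hat K)$ under~\eqref{(constraints spectral)}, which the paper supplies via \cite[Lemma~4.4]{continuum}), then componentwise Prohorov on~$\hat K$, preservation of negative definiteness in the weak limit, transfer of the constraints by testing against $1\in C_b(\hat K)$ together with continuity of the eigenvalues, and finally Fatou for the lower semicontinuity of~$\Sact$. The only point you leave implicit is precisely that approximate-diagonalization lemma for $\prec.\mid.\succ$-positive operators, which is the nontrivial input making your ``canonical block form'' step go through.
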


The remainder of this section is devoted to the proof of Theorem~\ref{Theorem minimizer}. The key idea for proving Theorem~\ref{Theorem minimizer} is essentially to apply Prohorov's theorem (see e.g.~\cite[Section~8.6]{bogachev}). 
To this end, we proceed in several steps.  
Given a minimizing sequence of negative definite measures which satisfies the side conditions~\eqref{(constraints)} or~\eqref{(constraints spectral)}, 
we first prove boundedness of a unitarily equivalent subsequence thereof (\S \ref{S boundedness}).  
The proof of Theorem~\ref{Theorem minimizer} is completed afterwards (\S \ref{S existence}).  
Once this is accomplished, we show that Theorem~\ref{Theorem minimizer} also applies in the case that a boundedness constraint is imposed (\S \ref{S boundedness constraint}). 

\subsection{Boundedness of Minimizing Sequences}\label{S boundedness}
Let us assume that~$(d\nu^{(k)})_{k \in \N}$ is a sequence of negative definite measures in~$\ndm$, 
either satisfying 
\begin{align*}
\Tr_V(-S\nu^{(k)}(\hat{K})) \le f \qquad \text{or} \qquad |\nu^{(k)}(\hat{K})| \le f
\end{align*}
for all~$k \in \N$ and some positive constant~$f > 0$ (and~$|\cdot|$ denotes the spectral weight). The aim of this subsection is to show that in both cases, there exists a sequence of unitary matrices~$(U_k)_{k \in \N}$ in~$\LL(V)$ (with respect to~$\prec . \mid . \succ$) such that the resulting sequence~${(U_k \? d\nu^{(k)} \? U_k^{-1})_{k \in \N}}$ is bounded in~$\ndm$ (with respect to the norm~\eqref{(norm of a negative definite measure)}). In particular, whenever the first condition is imposed, it eventually turns out that one can choose~$U_k = \Id_V$ for all~$k \in \N$. 
In preparation, let us state the following results:

\begin{Prp}\label{Proposition spectrum}
	For all~$B,C \in \LL(V)$, the operator products~$BC$ and~$CB$ have the same spectrum.
\end{Prp}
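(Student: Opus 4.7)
The plan is to prove the stronger claim that $BC$ and $CB$ have the same characteristic polynomial, which in the finite-dimensional setting $V \simeq \C^{2n}$ immediately yields equality of spectra. I will split into the easy case that $B$ (or $C$) is invertible, and then extend by a density/continuity argument.

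First, suppose $B \in \LL(V)$ is invertible. Then $CB = B^{-1}(BC)B$, so $BC$ and $CB$ are similar operators, and hence have identical characteristic polynomials and in particular identical spectra. The same conclusion holds by symmetry if $C$ is invertible.

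For the general case I would use a perturbation argument. Since $V$ is finite-dimensional, the map $z \mapsto \det(zI - B)$ is a polynomial with only finitely many zeros, so the operator $B - zI$ is invertible for all $z \in \C$ outside a finite set. For every such $z$, the invertible-case argument yields
\begin{equation*}
\det\!\bigl(\lambda I - (B - zI)C\bigr) \;=\; \det\!\bigl(\lambda I - C(B - zI)\bigr) \qquad \text{for all } \lambda \in \C.
\end{equation*}
Both sides are polynomials in the two complex variables $(\lambda,z)$, and they agree on the cofinite set described above. By continuity (or by the identity theorem for polynomials), they must agree for all $(\lambda,z) \in \C^2$. Setting $z=0$ gives $\det(\lambda I - BC) = \det(\lambda I - CB)$ for every $\lambda \in \C$, so $BC$ and $CB$ have the same characteristic polynomial and hence the same set of eigenvalues with the same algebraic multiplicities.

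I do not foresee any serious obstacle: the statement is classical finite-dimensional linear algebra, and the only subtlety is the treatment of the eigenvalue $0$, which the characteristic-polynomial formulation handles uniformly with the nonzero eigenvalues. An alternative, even shorter, route would be to argue directly at the level of eigenvectors: if $\lambda \neq 0$ and $BCv = \lambda v$ with $v \neq 0$, then $w := Cv$ is nonzero (else $\lambda v = 0$) and satisfies $CBw = \lambda w$, while for $\lambda = 0$ one invokes $\det(BC) = \det(B)\det(C) = \det(CB)$ to conclude that $0$ lies in one spectrum iff it lies in the other. Either route is essentially routine.
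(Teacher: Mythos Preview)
Your proof is correct and entirely self-contained; both the perturbation/characteristic-polynomial argument and the eigenvector alternative are standard and valid in the finite-dimensional setting~$V\simeq\C^{2n}$. The paper itself does not give a proof at all but merely refers the reader to~\cite[Section~3]{discrete} and~\cite[eq.~(3.5.6)]{pfp}, so your write-up actually supplies more detail than the paper does; in that sense there is nothing to compare beyond noting that your argument is the classical one and is fully adequate here.
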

\begin{proof}
	Follow the arguments in~\cite[Section~3]{discrete} or cf.~\cite[eq.~(3.5.6)]{pfp}.
\end{proof}

\begin{Lemma}\label{Lemma spectrum}
	Assume that~$U \in \LL(V)$ is unitary (with respect to~$\prec . \mid . \succ$), and let~$d\nu$ in~$\ndm$. 
	Then the operators~$\nu(\hat{K})$ and~$U \? \nu(\hat{K}) \? U^{-1}$ have the same spectrum.
\end{Lemma}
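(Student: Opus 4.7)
The plan is to reduce the claim to Proposition~\ref{Proposition spectrum} by a suitable choice of two operators whose products in the two orders realize the given conjugation. First I would note that the hypothesis that $U \in \LL(V)$ is unitary with respect to $\prec . \mid . \succ$ in particular implies that $U$ is invertible with $U^{-1} = U^{\ast}$; this is part of the definition recalled in \S\ref{S Basic Definitions}. Consequently, the operator $U \? \nu(\hat{K}) \? U^{-1}$ is a well-defined element of $\LL(V)$, and the assertion to be proved is the standard fact that conjugate (that is, similar) operators have coinciding spectra.

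Once invertibility of $U$ is in hand, I would apply Proposition~\ref{Proposition spectrum} to the operators $B := U \? \nu(\hat{K})$ and $C := U^{-1}$, both of which lie in $\LL(V)$. Then
\begin{align*}
B \? C = U \? \nu(\hat{K}) \? U^{-1} \qquad \text{and} \qquad C \? B = U^{-1} \? U \? \nu(\hat{K}) = \nu(\hat{K}) \:,
\end{align*}
so Proposition~\ref{Proposition spectrum} yields that $U \? \nu(\hat{K}) \? U^{-1}$ and $\nu(\hat{K})$ have the same spectrum, which is exactly the claim.

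There is no substantive obstacle here: the lemma is essentially a one-line corollary of Proposition~\ref{Proposition spectrum}, packaged so that the subsequent existence argument (which involves the unitary transformations $U_{j_k}$ in Theorem~\ref{Theorem minimizer}) may freely exchange $\nu(\hat{K})$ for $U \? \nu(\hat{K}) \? U^{-1}$ when reasoning about spectral quantities such as the spectral weight or the eigenvalues appearing in the constraints~\eqref{(constraints spectral)}. If one wished to avoid invoking Proposition~\ref{Proposition spectrum}, an equally short argument would observe that the characteristic polynomial is invariant under conjugation, so $\det(\lambda \? \Id - U \? \nu(\hat{K}) \? U^{-1}) = \det(\lambda \? \Id - \nu(\hat{K}))$ for every $\lambda \in \C$; but using Proposition~\ref{Proposition spectrum} fits more neatly into the flow of the paper.
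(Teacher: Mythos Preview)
Your proof is correct and follows essentially the same approach as the paper: both arguments reduce the claim to Proposition~\ref{Proposition spectrum} by factoring the conjugation into a product of two operators, the only difference being the cosmetic choice of whether to set~$B = U\,\nu(\hat{K})$, $C = U^{-1}$ (your version) or~$B = \nu(\hat{K})\,U^{-1}$, $C = U$ (the paper's version).
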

\begin{proof}
	Applying Proposition~\ref{Proposition spectrum}, we infer that the operators~${\nu(\hat{K}) = (\nu(\hat{K}) \? U^{-1}) \? U}$ and~${U\? \nu(\hat{K}) \? U^{-1} = U \? (\nu(\hat{K}) \? U^{-1})}$ have the same spectrum for any unitary matrix~$U$ in~$\LL(V)$.  
\end{proof}

\begin{Corollary}%\label{Corollary spectrum}
	For any negative definite measure~$d\nu \in \ndm$ and arbitrary unitary transformations~$U$ on~$V$ (with respect to~$\prec . \mid . \succ$), 
	\begin{align}\label{(unitary Lagrangian)}
	\L[U \? \nu \? U^{-1}] = \L[\nu] \qquad \text{and} \qquad \Sact(U \? \nu \? U^{-1}) = \Sact(\nu) \:. 
	\end{align}
\end{Corollary}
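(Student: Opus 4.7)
The plan is to reduce the statement to the fact that conjugating a linear operator on $V$ by a unitary transformation preserves the spectrum, and hence the spectral weight. The only piece of genuine work is checking that the conjugation $d\nu \mapsto U\,d\nu\,U^{-1}$ commutes with the constructions that feed into $\L$ and $\Sact$, namely the Fourier transform defining $P(\xi)$ and the product defining the closed chain $A(\xi)$.

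First I would establish that for every $\xi \in \scrM$ we have
\begin{align*}
P[U\nu U^{-1}](\xi) = U\,P[\nu](\xi)\,U^{-1}.
\end{align*}
This follows directly from Definition~\ref{Definition integral operator} applied to the bounded Borel function $k \mapsto e^{ik\xi}$: the measure $U\,d\nu\,U^{-1}$ is the operator-valued measure whose matrix elements are ${\prec u \mid U\nu U^{-1}\,v \succ} = {\prec U^{\ast}u \mid \nu\,U^{-1}v \succ}$, and linearity of the integral together with $U^\ast = U^{-1}$ yields the claimed identity. The closed chain then satisfies
\begin{align*}
A[U\nu U^{-1}](\xi) = P[U\nu U^{-1}](\xi)\,P[U\nu U^{-1}](-\xi) = U\,A[\nu](\xi)\,U^{-1}
\end{align*}
for all $\xi \in \scrM$, and by iteration $A[U\nu U^{-1}](\xi)^2 = U\,A[\nu](\xi)^2\,U^{-1}$.

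Next I would invoke Lemma~\ref{Lemma spectrum} (or rather its underlying content, Proposition~\ref{Proposition spectrum}): conjugation of a linear operator on $V$ by a unitary transformation preserves the spectrum together with algebraic multiplicities. Applying this pointwise in $\xi$ to both $A[\nu](\xi)$ and $A[\nu](\xi)^2$, the spectral weights agree:
\begin{align*}
\bigl|A[U\nu U^{-1}](\xi)\bigr| = \bigl|A[\nu](\xi)\bigr|, \qquad \bigl|A[U\nu U^{-1}](\xi)^2\bigr| = \bigl|A[\nu](\xi)^2\bigr|.
\end{align*}
Substituting these identities into the definition
\begin{align*}
\L[\omega](\xi) = \bigl|A[\omega](\xi)^2\bigr| - \tfrac{1}{2n}\bigl|A[\omega](\xi)\bigr|^2
\end{align*}
yields the first equality $\L[U\nu U^{-1}](\xi) = \L[\nu](\xi)$ for every $\xi \in \scrM$. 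Integrating this pointwise identity against Lebesgue measure $d\mu$ on $\scrM$ then gives $\Sact(U\nu U^{-1}) = \Sact(\nu)$.

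The only potential subtlety is the first step, namely verifying rigorously that conjugation of the operator-valued measure commutes with the integral defining $P(\xi)$; this is however an immediate consequence of the componentwise definition via the polarization identity~\eqref{(polarization)} and Definition~\ref{Definition integral operator}, so no real obstacle arises. Everything else is purely algebraic and relies on the already stated spectral invariance under unitary conjugation.
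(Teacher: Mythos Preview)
Your proposal is correct and follows essentially the same approach as the paper's proof: both first establish $P[U\nu U^{-1}](\xi) = U\,P[\nu](\xi)\,U^{-1}$ via the definition of the integral together with $U^{\ast}=U^{-1}$, then deduce the corresponding identity for the closed chain, and finally invoke the spectral invariance under conjugation (Lemma~\ref{Lemma spectrum}) to conclude equality of spectral weights, the Lagrangian, and the action. The only cosmetic difference is that the paper writes out the inner-product computation for $P$ explicitly and appeals to non-degeneracy, whereas you phrase it via Definition~\ref{Definition integral operator} and polarization; these are the same argument.
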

\begin{proof}
	Introducing the kernel of the fermionic projector by~\eqref{(fermionic projector)} and making use of Definition~\ref{Definition integral negative}, for all~$u,w \in V$ and~$\xi \in \scrM$ we obtain 
	\begin{align*}
	&{\prec u \mid P[U \? \nu \? U^{-1}](\xi) \: w \succ} = {\prec u \mid \int_{\hat{K}} e^{ik\xi} \: d \left(U \? \nu \? U^{-1}\right)(k) \? w \succ} \\
	&\qquad = \int_{\hat{K}} e^{ik\xi} \: {d\prec u \mid U \? \nu(k) \? U^{-1} \? w \succ} = \int_{\hat{K}} e^{ik\xi} \: {d\prec U^{-1} \? u \mid \nu(k) \? U^{-1} \? w \succ} \\
	&\qquad= {\prec U^{-1} \? u \mid \int_{\hat{K}} e^{ik\xi} \: d\nu (k) \? U^{-1} \? w \succ} = {\prec u \mid U \int_{\hat{K}} e^{ik\xi} \: d\nu (k) \? U^{-1} \? w \succ} \\
	&\qquad = {\prec u \mid U \? P[\nu](\xi) \? U^{-1} \? w \succ} \phantom{\int}
	\end{align*}
	for any negative definite measure~$d\nu \in \ndm$ and any unitary matrix~$U$ (with respect to~$\prec . \mid . \succ$). Thus non-degeneracy of the indefinite inner product implies that 
	\begin{align*}%\label{(UPU)}
	P[U \? \nu \? U^{-1}] = U \? P[\nu] \? U^{-1} \:.
	\end{align*}
	Henceforth, employing Lemma~\ref{Lemma spectrum}, we deduce that the spectral weight of the closed chain~$A$ 
	is unaffected by unitary similarity, i.e. 
	\begin{align*}%\label{(UAU)}
	\left|A[U \? \nu \? U^{-1}](\xi)\right| = \left|U \? A[\nu](\xi) \? U^{-1}\right| = \left|A[\nu](\xi) \right| \qquad \text{for all $\xi \in \scrM$} \:. 
	\end{align*}
	Analogously, for every~$\xi \in \scrM$ we obtain 
	\begin{align*}
	\left|A[U \? \nu \? U^{-1}](\xi)^2 \right| = \left|\big(U \? A[\nu](\xi) \? U^{-1}\big)^2 \right| = \left|A[\nu](\xi)^2 \right| \:, 
	\end{align*}
	thus implying that
	\begin{align*}
	\L[U \? \nu \? U^{-1}](\xi) = \L[\nu](\xi) \qquad \text{for all~$\xi \in \scrM$}
	\end{align*}
	as well as~$\Sact(U \? \nu \? U^{-1}) = \Sact(\nu)$. This completes the proof. 
\end{proof}

We are now in the position to prove the following result: 
\begin{Lemma}\label{Lemma uniform boundedness}
	Let~$f > 0$ and assume that~$(d\nu^{(k)})_{k \in \N}$ is a sequence in~$\ndm$ such that
	\begin{align*}
	\Tr_V \big(-S\nu^{(k)}(\hat{K})\big) \le f \qquad \text{for all~$k \in \N$} 
	\end{align*}
	(where~$S$ denotes the signature matrix). Then there exists
	a positive constant~$C > 0$ in such a way that~$d\|\nu^{(k)}\| \le C$ for all~$k \in \N$, where~$d\|\cdot\|$ denotes the total variation according to Definition~\ref{Definition variation}. 
\end{Lemma}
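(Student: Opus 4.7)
The plan is to exploit the fact that the trace hypothesis directly controls the $2n$ diagonal positive measures $d\mu^{(k)}_i := d{\prec \mathfrak{e}_i \mid -\nu^{(k)} \mathfrak{e}_i \succ}$, and then to dominate every off-diagonal complex measure entering the definition~\eqref{(norm of a negative definite measure)} of $d\|\cdot\|$ by these diagonal ones via polarization and Cauchy--Schwarz. In this case no unitary conjugation is actually required; one may take $U_k = \Id_V$ throughout, in contrast to the complementary situation in which only the spectral-weight constraint $|\nu^{(k)}(\hat{K})|\le f$ is assumed.

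First I would re-express the trace in the pseudo-orthonormal basis. Using~${\prec u \mid v \succ} = \langle u \mid Sv \rangle_{\C^{2n}}$ from~\eqref{(3.13)}, one checks at once that $\Tr_V(-S\nu^{(k)}(\hat{K})) = \sum_{i=1}^{2n} \prec \mathfrak{e}_i \mid -\nu^{(k)}(\hat{K})\,\mathfrak{e}_i\succ = \sum_{i=1}^{2n} \mu^{(k)}_i(\hat{K})$, where each $d\mu^{(k)}_i$ is a positive finite Borel measure by Definition~\ref{Definition 4.1}. The hypothesis therefore delivers the uniform bound $\sum_i \mu^{(k)}_i(\hat{K}) \le f$.

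Next, for off-diagonal indices $i\neq j$, I would apply the polarization identity~\eqref{(polarization)} to express $d\prec \mathfrak{e}_i \mid \nu^{(k)} \mathfrak{e}_j \succ$ as a finite linear combination, with coefficients in $\{\pm \tfrac14,\ \pm \tfrac i 4\}$, of the four negative measures $d\prec w \mid \nu^{(k)} w \succ$ with $w \in \{\mathfrak{e}_i \pm \mathfrak{e}_j,\ \mathfrak{e}_i \pm i\mathfrak{e}_j\}$. Since each of these measures has constant sign, its total variation coincides with its absolute total mass $\prec w \mid -\nu^{(k)}(\hat{K})\, w \succ$, and subadditivity of the variation gives
\[
d|\prec \mathfrak{e}_i \mid \nu^{(k)} \mathfrak{e}_j \succ|(\hat{K}) \;\le\; \tfrac{1}{4} \sum_{w} \prec w \mid -\nu^{(k)}(\hat{K})\, w \succ.
\]

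The crucial step is to bound the right-hand side by $\mu^{(k)}_i(\hat{K})$ and $\mu^{(k)}_j(\hat{K})$. Since $-\nu^{(k)}(\hat{K})$ induces a non-negative (and hence $\prec\cdot\mid\cdot\succ$-self-adjoint) sesquilinear form on~$V$, Cauchy--Schwarz yields $|\prec \mathfrak{e}_i \mid -\nu^{(k)}(\hat{K}) \mathfrak{e}_j\succ|^2 \le \mu^{(k)}_i(\hat{K})\,\mu^{(k)}_j(\hat{K})$; expanding $\prec w \mid -\nu^{(k)}(\hat{K})\, w\succ$ for each of the four choices of $w$ and applying AM--GM to the resulting real cross terms therefore gives $\prec w \mid -\nu^{(k)}(\hat{K})\,w\succ \le 2\bigl(\mu^{(k)}_i(\hat{K}) + \mu^{(k)}_j(\hat{K})\bigr)$. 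Summing over $i,j=1,\ldots,2n$ and invoking the trace bound produces a uniform estimate of the form $d\|\nu^{(k)}\| \le C f$ with $C$ depending only on~$n$. The one conceptual subtlety is that the ambient space is \emph{indefinite}, so one cannot naively reduce to Hilbert-space estimates and conclude that off-diagonal entries are controlled by diagonal ones in any basis; the argument turns on the observation that $-\nu^{(k)}(\hat{K})$ nevertheless defines a genuine non-negative Hermitian form with respect to the spin scalar product, which is exactly the structure Cauchy--Schwarz needs.
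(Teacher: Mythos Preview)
Your proposal is correct and follows essentially the same strategy as the paper: identify $\Tr_V(-S\nu^{(k)}(\hat K))$ with the sum $\sum_i \mu_i^{(k)}(\hat K)$ of the diagonal positive measures, and then dominate the off-diagonal total variations by the diagonal ones via the Cauchy--Schwarz inequality for the non-negative sesquilinear form $\prec\,\cdot\mid -\nu^{(k)}(\cdot)\,\cdot\,\succ$. The only technical difference is that the paper applies Cauchy--Schwarz directly to each partition element $-\nu^{(k)}(E_n)$ (followed by Young's inequality), thereby bypassing your polarization step and obtaining the slightly sharper bound $d\|\prec \mathfrak{e}_i\mid \nu^{(k)}\mathfrak{e}_j\succ\|\le \tfrac12\bigl(\mu_i^{(k)}(\hat K)+\mu_j^{(k)}(\hat K)\bigr)$.
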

\begin{proof}
	For convenience, we fix an arbitrary integer~$k \in \N$ and let~$d\nu = d\nu^{(k)}$. Next, we let~$(\mathfrak{e}_i)_{i=1, \ldots,2n}$ be a pseudo-orthonormal basis of~$V$ with signature matrix~$S$ such that~\eqref{(3.13)} is satisfied. Then~${d\prec \mathfrak{e}_i \mid \nu \: \mathfrak{e}_j \succ}$ is a finite complex measure in~$\mathbf{M}_{\C}(\hat{K})$ for every~$i,j \in \{1, \ldots, 2n \}$ according to Definition~\ref{Definition Operator-valued measure}, i.e. 
	\begin{align*}
	d\|{\prec \mathfrak{e}_i \mid \nu \: \mathfrak{e}_j \succ}\| = d|{\prec \mathfrak{e}_i \mid \nu \: \mathfrak{e}_j \succ}|(\hat{K}) < \infty \qquad \text{for all~$i, j = 1, \ldots, 2n$} \:.
	\end{align*}
	Employing the definition of the total variation of complex measures and applying the Schwarz inequality (see e.g.~\cite[Lemma A.13]{langerma} or~\cite[ineq.\ (2.3.9)]{gohberg}), we obtain 
	\begin{align*}
	d\|{\prec \mathfrak{e}_i \mid \nu \: \mathfrak{e}_j \succ}\| &= \sup \sum_{n \in \N} \left|\prec \mathfrak{e}_i \mid \nu(E_n) \: \mathfrak{e}_j \succ \right| = \sup \sum_{n \in \N} \left|\prec \mathfrak{e}_i \mid -\nu(E_n) \: \mathfrak{e}_j \succ \right| \\
	&\le \sup \sum_{n \in \N} \sqrt{\left|\prec \mathfrak{e}_i \mid - \nu(E_n) \: \mathfrak{e}_i \succ \right|} \: \sqrt{ \left|\prec \mathfrak{e}_j \mid - \nu(E_n) \: \mathfrak{e}_j \succ \right|} \:, 
	\end{align*}
	where the supremum is taken over all partitions~$(E_n)_{n \in \N}$ of~$\hat{K}$ (cf.~\cite[Chapter 6]{rudin}). Applying Young's inequality (see e.g.~\cite[\S 1]{alt}), for all~$i,j \in \{1, \ldots, 2n \}$ we arrive at
	\begin{align*}
	d\|{\prec \mathfrak{e}_i \mid \nu \: \mathfrak{e}_j \succ}\| &\le \sup \sum_{n \in \N} \sqrt{\left|\prec \mathfrak{e}_i \mid - \nu(E_n) \: \mathfrak{e}_j \succ \right|} \: \sqrt{ \left|\prec \mathfrak{e}_j \mid - \nu(E_n) \: \mathfrak{e}_j \succ \right|} \\
	&\le \frac{1}{2} \sup \sum_{n \in \N} \left(\left|\prec \mathfrak{e}_i \mid - \nu(E_n) \: \mathfrak{e}_i \succ \right| + \left|\prec \mathfrak{e}_j \mid - \nu(E_n) \: \mathfrak{e}_j \succ \right| \right) \\
	&\le \frac{1}{2} \left[\sup \sum_{n \in \N} \left|\prec \mathfrak{e}_i \mid - \nu(E_n) \: \mathfrak{e}_i \succ \right| + \sup \sum_{n \in \N} \left|\prec \mathfrak{e}_j \mid - \nu(E_n) \: \mathfrak{e}_j \succ \right| \right]\\
	&= \frac{1}{2} \left(d\|{\prec \mathfrak{e}_i \mid \nu \: \mathfrak{e}_i \succ} \| + d\|{\prec \mathfrak{e}_j \mid \nu \: \mathfrak{e}_j \succ}\|\right) \:. 
	\end{align*}
	Due to the fact that~${d\prec \mathfrak{e}_i \mid - \nu \: \mathfrak{e}_i \succ}$ is a positive measure for each~$i \in \{1, \ldots, 2n \}$, the total variation~$d\|{\prec \mathfrak{e}_i \mid \nu \: \mathfrak{e}_j \succ}\| $ is bounded by 
	\begin{align*}
	d\|{\prec \mathfrak{e}_i \mid \nu \: \mathfrak{e}_j \succ}\| \le \sum_{i=1}^{2n} d\|{\prec \mathfrak{e}_i \mid \nu \: \mathfrak{e}_i \succ}\| = \sum_{i=1}^{2n} \prec \mathfrak{e}_i \mid -\nu(\hat{K}) \: \mathfrak{e}_i \succ 
	\end{align*}
	for all~$i,j \in \{1, \ldots, 2n \}$. The last expression can be estimated by 
	\begin{align*}%\label{(less f)}
	\sum_{i=1}^{2n} {\prec \mathfrak{e}_i \mid -\nu(\hat{K}) \? \mathfrak{e}_i \succ} = \sum_{i=1}^{2n} \langle \mathfrak{e}_i \mid -S \nu(\hat{K}) \? \mathfrak{e}_i \rangle = \Tr_V \big(-S\nu(\hat{K}) \big) \le f \:, 
	\end{align*}
	thus completing the proof. 
\end{proof}

In the case that the spectral weight is bounded (in analogy to~\cite[Theorem~6.1]{discrete}), we obtain the following result: 

\begin{Lemma}\label{Lemma spectral weight bounded}
	Let~$f > 0$ and assume that~$(d\nu^{(k)})_{k \in \N}$ is a sequence in~$\ndm$ such that 
	\begin{align*}
	|\nu^{(k)}(\hat{K})| \le f \qquad \text{for all~$k \in \N$}
	\end{align*}
	(where~$|\cdot|$ denotes the spectral weight). Then there is a sequence~$(U_k)_{k \in \N}$ of unitary operators on~$V$ (with respect to~$\prec . \mid . \succ$) as well as a positive constant $C > 0$ such that~$d\|U_k \: \nu^{(k)} \: U_k^{-1} \| \le C$ for all~$k \in \N$ (where~$d\|\cdot\|$ denotes the total variation according to Definition~\ref{Definition variation}). 
\end{Lemma}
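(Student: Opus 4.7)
The plan is to reduce to the situation already covered by Lemma~\ref{Lemma uniform boundedness}. Set $B_k := -\nu^{(k)}(\hat K) \in \LL(V)$ for each $k$; by Definition~\ref{Definition 4.1}, $B_k$ is self-adjoint and non-negative with respect to~$\prec \cdot \mid \cdot \succ$, so its spectrum is real and its spectral weight satisfies $|B_k| = |\nu^{(k)}(\hat K)| \le f$. Inspecting the chain of inequalities in the proof of Lemma~\ref{Lemma uniform boundedness} shows that for any negative definite measure $\mu$ the total variation obeys $d\|\mu\| \le \Tr_V(-S\,\mu(\hat K))$. Consequently it would suffice to produce, for each~$k$, a unitary $U_k \in \LL(V)$ (with respect to~$\prec\cdot\mid\cdot\succ$) with $\Tr_V(S\,U_k B_k U_k^{-1}) \le C_1$ uniformly in~$k$, because Lemma~\ref{Lemma uniform boundedness} would then apply to the conjugated sequence $\tilde\nu^{(k)} := U_k \nu^{(k)} U_k^{-1}$ (which remains in~$\ndm$, since unitary conjugation preserves negative definiteness) to deliver the desired uniform bound on $d\|\tilde\nu^{(k)}\|$.

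To produce such $U_k$, I would invoke the canonical form of non-negative operators on $(V, \prec\cdot\mid\cdot\succ)$ (see, e.g.,~\cite{gohberg} or~\cite{bognar}): under unitary conjugation each $B_k$ is equivalent to an orthogonal direct sum of (i) one-dimensional blocks carrying the real non-zero eigenvalues $\lambda_j$ of $B_k$ on positive- or negative-type lines, with $\sign\lambda_j$ matched to the type of the line, and (ii) at most two-dimensional Jordan blocks for the eigenvalue zero supported on neutral two-planes. I would pick a unitary $V_k$ realising this decomposition in the fixed pseudo-orthonormal basis~$(\mathfrak{e}_i)$. On the one-dimensional pieces, non-negativity forces the diagonal entries $d_i$ of $V_k B_k V_k^{-1}$ to satisfy $\epsilon_i d_i = |\lambda_j| \ge 0$ (where $\epsilon_i = \pm 1$ records the signature of $\mathfrak{e}_i$), so their combined contribution to the auxiliary trace is exactly
\[
\sum_{j}|\lambda_j| \;=\; |B_k| \;\le\; f .
\]
On each neutral two-plane supporting a zero-eigenvalue Jordan block, a further ``Lorentz boost'' $U_\theta$ (itself unitary with respect to $\prec\cdot\mid\cdot\succ$) rescales the block by the factor $e^{-2\theta}$; choosing the boost parameters sufficiently large, the total contribution of all such blocks to the auxiliary trace falls below~$1$. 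The composition $U_k$ of $V_k$ with these boosts then satisfies $\Tr_V(S U_k B_k U_k^{-1}) \le f+1$, and Lemma~\ref{Lemma uniform boundedness} yields $d\|\tilde\nu^{(k)}\| \le C$ for a uniform constant~$C$.

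The main obstacle is precisely the possible presence of two-dimensional Jordan blocks for the eigenvalue zero: they contribute nothing to the spectral-weight hypothesis $|B_k| \le f$ (all of their eigenvalues vanish), yet in a poorly chosen basis they can inflate the auxiliary trace $\Tr_V(S B_k)$ arbitrarily. The rescue is that the unitary orbit of such a nilpotent block is non-compact and accumulates in norm on the zero matrix along the one-parameter Lorentz-boost group on the corresponding neutral plane; this lets the nilpotent contribution to the auxiliary trace be shrunk below any prescribed positive threshold without disturbing the semisimple part or the identity $\sum_j|\lambda_j| = |B_k|$, and thereby makes the bound $f+1$ attainable by a genuine unitary conjugation rather than merely in a limit.
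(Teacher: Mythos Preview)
Your proposal is correct and follows essentially the same idea as the paper: conjugate each $\nu^{(k)}(\hat K)$ by a $\prec\cdot\mid\cdot\succ$-unitary so that it becomes diagonal up to an arbitrarily small remainder, after which the spectral-weight hypothesis controls the diagonal part and the remainder is absorbed into the constant. The paper packages the near-diagonalization as a black-box citation to \cite[Lemma~4.4]{continuum} and then bounds $\|U\nu(\hat K)U^{-1}\|_1\le f+1$ directly (re-running the Schwarz--Young chain from Lemma~\ref{Lemma uniform boundedness} in the new basis), whereas you spell out the canonical form explicitly---one-dimensional blocks for the nonzero eigenvalues, at most two-dimensional nilpotent blocks at zero on neutral planes, shrunk by Lorentz boosts---and then reduce cleanly to Lemma~\ref{Lemma uniform boundedness} by bounding $\Tr_V(-S\,U_k\nu^{(k)}(\hat K)U_k^{-1})\le f+1$. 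The two arguments reach the same quantitative conclusion; your reduction is a bit more modular, while the paper's version avoids restating the structure theory. (One minor slip: the inequality you extract from Lemma~\ref{Lemma uniform boundedness} should read $d\|\mu\|\le (2n)^2\,\Tr_V(-S\mu(\hat K))$, not $d\|\mu\|\le \Tr_V(-S\mu(\hat K))$; this is harmless since you ultimately invoke the lemma itself rather than the bare inequality.)
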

\begin{proof}
	The basic idea is to make use of~\cite[Lemma~4.4]{continuum}. For convenience, we fix an arbitrary integer~$k \in \N$ and let~$d\nu = d\nu^{(k)}$. Moreover, let~$(\mathfrak{e}_i)_{i=1, \ldots,2n}$ be a pseudo-orthonormal basis of~$V$ with signature matrix~$S$ such that~\eqref{(3.13)} is satisfied (see for instance~\cite[\S 2.3]{gohberg} or~\cite[\S 3.3]{langerma}). Since~$V$ is a finite-dimensional vector space, all norms on~$\LL(V)$ are equivalent, and one of these norms is given by
	\begin{align}\label{(Spaltensummennorm)}
	\|A\|_1 = \max_{j=1, \ldots, 2n} \sum_{i=1}^{2n} \left|\langle \mathfrak{e}_i \mid A \mathfrak{e}_j \rangle \right|
	\end{align}
	for any~$A \in \LL(V)$, where~$\left|\cdot\right|$ denotes the absolute value. Moreover, for any unitary matrix~$U$ in~$\LL(V)$ (with respect to~$\prec . \mid . \succ$), we may introduce another pseudo-orthonormal basis~$(\mathfrak{f}_j)_{j=1, \ldots, 2n}$ by
	\begin{align}\label{(basis f)}
	\mathfrak{f}_i := U^{-1} \: \mathfrak{e}_i \qquad \text{for all~$i=1, \ldots, 2n$} \:.
	\end{align}
	Making use of~$U^{\ast} = U^{-1}$, for all~$i,j = 1, \ldots, 2n$ we obtain
	\begin{align}\label{(similarity)}
	{d\prec \mathfrak{e}_i \mid U \: \nu \: U^{-1} \mathfrak{e}_j \succ} = {d\prec U^{\ast} \: \mathfrak{e}_i \mid \nu \: U^{-1} \: \mathfrak{e}_j \succ} 
	= {d\prec \mathfrak{f}_i \mid \nu \: \mathfrak{f}_j \succ} \:. 
	\end{align}
	Since~$d\nu$ is a negative definite measure, the operator~$-\nu(\hat{K})$ is positive~\eqref{(4.2')}. 
	Thus in view of~\cite[Lemma~4.4]{continuum}, for any~$\varepsilon > 0$ there is a unitary matrix~$U = U(\varepsilon)$ in~$\LL(V)$ (with respect to~$\prec . \mid . \succ$) so that~$U\: \nu(\hat{K}) \: U^{-1}$ is diagonal, up to an arbitrarily small error term~$\Delta \nu(\hat{K})$ with~$\|\Delta \nu(\hat{K})\|_1 < \varepsilon$. Since~$k \in \N$ is arbitrary, we thus obtain a sequence of negative definite measures~$(U_k \: d\nu^{(k)} \: U_k^{-1})_{k \in \N}$. 
	
	Next, in order to prove that~$(U_k \: d\nu^{(k)} \: U_k^{-1})_{k \in \N}$ is bounded with respect to the total variation defined by~\eqref{(norm of a negative definite measure)}, for each~$k \in \N$ we consider the basis~$(\mathfrak{f}_i)_{i=1, \ldots, 2n}$ given by~\eqref{(basis f)} with respect to the unitary matrix~$U = U_k$. 
	Accordingly, each~${d\prec \mathfrak{f}_i \mid \nu \: \mathfrak{f}_j \succ}$ is a finite complex measure in~$\mathbf{M}_{\C}(\hat{K})$ in view of Definition~\ref{Definition Operator-valued measure}, 
	\begin{align*}
	d\|{\prec \mathfrak{f}_i \mid \nu \: \mathfrak{f}_j \succ}\| = d|{\prec \mathfrak{f}_i \mid \nu \: \mathfrak{f}_j \succ}|(\hat{K}) < \infty \qquad \text{for all~$i, j = 1, \ldots, 2n$} \:.
	\end{align*}
	Employing the definition of the total variation of complex measures and applying the Schwarz inequality in analogy to the proof of Lemma~\ref{Lemma uniform boundedness}, we obtain 
	\begin{align*}
	d\|{\prec \mathfrak{f}_i \mid \nu \: \mathfrak{f}_j \succ}\| 
	\le \sup \sum_{n \in \N} \sqrt{\left|\prec \mathfrak{f}_i \mid - \nu(E_n) \: \mathfrak{f}_i \succ \right|} \: \sqrt{ \left|\prec \mathfrak{f}_j \mid - \nu(E_n) \: \mathfrak{f}_j \succ \right|} \:, 
	\end{align*}
	where the supremum is taken over all partitions~$(E_n)_{n \in \N}$ of~$\hat{K}$ (cf.~\cite[Chapter 6]{rudin}). Applying Young's inequality in analogy to the proof of Lemma~\ref{Lemma uniform boundedness}, we arrive at
	\begin{align*}
	d\|{\prec \mathfrak{f}_i \mid \nu \: \mathfrak{f}_j \succ}\| \le 
	\frac{1}{2} \left(d\|{\prec \mathfrak{f}_i \mid \nu \: \mathfrak{f}_i \succ} \| + d\|{\prec \mathfrak{f}_j \mid \nu \: \mathfrak{f}_j \succ}\|\right) 
	\end{align*}
	for all~$i,j \in \{1, \ldots, 2n \}$. Since~$S = S^{-1}$ and~$U^{\ast} = S^{-1} \: U^{\dagger} \: S$ in view of~\cite[eq.~(4.1.3)]{gohberg} (where~$U^{\dagger}$ denotes the adjoint with respect to~$\langle \, . \, | \, . \, \rangle$ and~$U^{\ast}$ the adjoint with respect to~$\prec . \mid .\succ$), 
	for all~$i=1, \ldots, 2n$ we obtain 
	\begin{align*}
	&d\|{\prec \mathfrak{f}_i \mid -\nu \: \mathfrak{f}_i \succ}\| = {\prec \mathfrak{f}_i \mid - \nu(\hat{K}) \: \mathfrak{f}_i \succ} \le 
	\sum_{i,j=1}^{2n} \left| \prec U^{-1} \: \mathfrak{e}_i \mid \nu(\hat{K}) \: U^{-1} \: \mathfrak{e}_j \succ \right| \\
	&\qquad \le \sum_{i,j=1}^{2n} \left| \prec SU^{\ast} \: SS \mathfrak{e}_i \mid S\nu(\hat{K}) \: U^{-1} \: \mathfrak{e}_j \succ \right|  
	= \sum_{i,j=1}^{2n} \left| \langle U^{\dagger} \: \mathfrak{e}_i \mid \nu(\hat{K}) \: U^{-1} \: \mathfrak{e}_j \rangle \right| \: |s_i| \\
	&\qquad = \sum_{i,j =1}^{2n} \left| \langle \mathfrak{e}_i \mid U\:  \nu(\hat{K}) \: U^{-1} \:\mathfrak{e}_j \rangle \right| \stackrel{\eqref{(Spaltensummennorm)}}{\le} 2n \:  \|U \: \nu(\hat{K}) \: U^{-1} \|_1 \:, 
	\end{align*}
	where we made use of~$S\mathfrak{e}_i = s_i \mathfrak{e}_i$ with~$|s_i| = |\langle \mathfrak{e}_i \mid S \mathfrak{e}_i \rangle | = 1$ for all~$i=1, \ldots, 2n$ and employed the fact that~${d \prec \mathfrak{f}_i \mid -\nu \: \mathfrak{f}_i \succ}$ is a positive measure for any~$i \in \{1, \ldots, 2n\}$. 
	
	Taken the previous results together, by~\eqref{(similarity)} we obtain the inequality 
	\begin{align}\label{(inequality)}
	d\|{\prec \mathfrak{e}_i \mid U \: \nu \: U^{-1} \: \mathfrak{e}_i \succ}\| = d\|{\prec \mathfrak{f}_i \mid -\nu \: \mathfrak{f}_i \succ}\| \le 2n \: \|U \: \nu(\hat{K}) \: U^{-1} \|_1
	\end{align}
	for all~$i = 1, \ldots, 2n$. Thus it only remains to find an upper bound for~$\|U \: \nu(\hat{K}) \: U^{-1} \|_1$ in terms of~$f$ by establishing a connection to the spectral weight~$|\nu(\hat{K})|$. To this end we exploit the fact that~${U \? \nu(\hat{K}) \? U^{-1}}$ is diagonal according to~\cite[Lemma~4.4]{continuum}, up to an arbitrarily small error term~$\Delta \nu(\hat{K})$, 
	\begin{align*}
	U \? \nu(\hat{K}) \? U^{-1} = \diag\big(\tilde{\lambda}_1(U), \ldots, \tilde{\lambda}_{2n}(U) \big) + \Delta \nu(\hat{K}) \:. 
	\end{align*}
	Denoting 
	the eigenvalues of~${U \? \nu(\hat{K}) \? U^{-1}}$ by~$\lambda_i(U)$ for all~$i = 1, \ldots, 2n$, by choosing the error term~$\Delta \nu(\hat{K})$ sufficiently small we can arrange that
	\begin{align*}
	\sum_{i=1}^{2n} |\tilde{\lambda}_i(U) - \lambda_i(U)| < \varepsilon \qquad \text{for any~$\varepsilon > 0$} \:. 
	\end{align*}
	Since the off-diagonal elements~$\|\Delta \nu(\hat{K})\|_1 < \varepsilon$ are arbitrarily small, we thus obtain
	\begin{align*}
	\|U \? \nu(\hat{K}) \? U^{-1} \|_1 \le \|\diag(\tilde{\lambda}_1(U), \ldots, \tilde{\lambda}_{2n}(U)) \|_1 + \|\Delta \nu(\hat{K})\|_1 \le \sum_{i=1}^{2n} |\lambda_i(U)| + 2\varepsilon \:.
	\end{align*}
	Applying Lemma~\ref{Lemma spectrum}, we conclude that~$|\nu(\hat{K})| = |U \? \nu(\hat{K}) \? U^{-1}|$ (where~$|\cdot|$ denotes the spectral weight). 
	Choosing~$\varepsilon < 1/2$, we arrive at
	\begin{align*}
	\|U \? \nu(\hat{K}) \? U^{-1} \|_1 \le |\nu(\hat{K})| + 1 \le f+ 1 \:. 
	\end{align*}
	Hence in view of Definition~\ref{Definition variation} and~\eqref{(inequality)}, we finally obtain
	\begin{align*}
	d\| U_k \: \nu^{(k)} \: U_k^{-1} \| = \sum_{i,j = 1}^{2n} d\|{\prec \mathfrak{e}_i \mid U_k \: \nu^{(k)} \: U_k^{-1} \: \mathfrak{e}_j \succ} \| \le (2n)^3 \: (f+1) =: C \:. 
	\end{align*}
	This completes the proof. 
\end{proof}

The major simplification when restricting attention to compact subsets is that any minimizing sequence is uniformly tight a~priori. As a consequence, we may apply Prohorov's theorem to each component, thereby obtaining the desired minimizer. 

\subsection{Preparatory Result}\label{S preparatory} 
Given a sequence of negative definite measures which is bounded and uniformly tight, 
we employ Prohorov's theorem to prove that a subsequence thereof converges weakly (see Definition~\ref{Definition weak convergence}) to a negative definite measure:

\begin{Lemma}\label{Lemma negative}
	Let~$(d\nu_k)_{k \in \N}$ be a sequence of negative definite measures in~$\ndm$ with the following properties:  
	\begin{itemize}[leftmargin=2em]
		\item[\rm{(a)}] %The sequence~$(d\nu_k)_{k \in \N}$ is bounded, that is, t
		There is a constant~$C > 0$ such that~$d|\nu_k|(\hat{K}) \le C$ for all~$k \in \N$. 
		\item[\rm{(b)}] The sequence~$(d\nu_k)_{k \in \N}$ is uniformly tight in the sense that, for every~$\varepsilon > 0$, there is a compact subset~$K_{\varepsilon} \subset \hat{K}$ such that~$d|\nu_k|(\hat{K} \setminus K_{\varepsilon}) < \varepsilon$ for all~$k \in \N$.  
	\end{itemize}
	Then a subsequence of~$(d\nu_k)_{k \in \N}$ converges weakly to some negative definite measure~$d\nu$.
\end{Lemma}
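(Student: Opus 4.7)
The plan is to apply Prohorov's theorem to a finite collection of scalar positive measures associated to well-chosen vectors in $V$, extract a common subsequence by a diagonal argument, and assemble the scalar weak limits into a negative definite operator-valued measure via polarization. Fix a pseudo-orthonormal basis $(\mathfrak{e}_i)_{i=1,\ldots,2n}$ of $V$ satisfying~\eqref{(3.13)}. For any $u = \sum_i \alpha_i \mathfrak{e}_i \in V$, Definition~\ref{Definition 4.1} guarantees that $d\prec u \mid -\nu_k\, u \succ$ is a positive finite Borel measure on $\hat K$, and expanding sesquilinearly yields, for every Borel set $\Omega \subset \hat K$, the estimate
\begin{align*}
0 \le \prec u \mid -\nu_k(\Omega)\, u \succ \le \Bigl(\sum_{i=1}^{2n} |\alpha_i|\Bigr)^{2}\, d|\nu_k|(\Omega)\:.
\end{align*}
Hence hypotheses (a) and (b) transfer directly: for each fixed $u$, the positive measures $d\prec u \mid -\nu_k u \succ$ are uniformly bounded in total mass and uniformly tight on $\hat K$.

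Let $\mathcal U \subset V$ be the finite set consisting of the basis vectors $\mathfrak{e}_i$ together with $\mathfrak{e}_i \pm \mathfrak{e}_j$ and $\mathfrak{e}_i \pm i\mathfrak{e}_j$ for $1 \le i < j \le 2n$. For each $u \in \mathcal U$, Prohorov's theorem (see~\cite[Section~8.6]{bogachev}) furnishes a weakly convergent subsequence of $(d\prec u \mid -\nu_k u \succ)_{k \in \N}$ with positive finite Borel limit measure $\mu_u$ on $\hat K$. Since $\mathcal U$ is finite, a standard diagonal extraction produces a single subsequence --- which I continue to label by $k$ --- along which $d\prec u \mid -\nu_k u \succ \rightharpoonup \mu_u$ simultaneously for every $u \in \mathcal U$. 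Now define the candidate limit on the diagonal by $d\prec \mathfrak{e}_i \mid -\nu\, \mathfrak{e}_i \succ := \mu_{\mathfrak{e}_i}$, and off-diagonally by reading the polarization identity~\eqref{(polarization)} backwards,
\begin{align*}
d\prec \mathfrak{e}_i \mid \nu\, \mathfrak{e}_j \succ := \tfrac{1}{4}\bigl(-\mu_{\mathfrak{e}_i+\mathfrak{e}_j} - i\,\mu_{\mathfrak{e}_i+i\mathfrak{e}_j} + \mu_{\mathfrak{e}_i-\mathfrak{e}_j} + i\,\mu_{\mathfrak{e}_i-i\mathfrak{e}_j}\bigr) \in \mathbf{M}_{\C}(\hat K)\:.
\end{align*}
These $(2n)^2$ complex measures assemble into an operator-valued measure $d\nu$ exactly as in the construction used in the proof of Lemma~\ref{Lemma OVM Banach space}.

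It remains to verify that $d\nu \in \ndm$ and that $d\nu_k \rightharpoonup d\nu$. By applying the scalar polarization identity to both sides, for every $u \in V$ and every $f \in C_b(\hat K)$ one has $\int f\, d\prec u \mid -\nu_k u \succ \to \int f\, d\prec u \mid -\nu u \succ$. In particular $d\prec u \mid -\nu u \succ$ is a weak limit of positive finite measures and is therefore itself a positive finite Borel measure, which is precisely the defining condition of Definition~\ref{Definition 4.1}; so $d\nu \in \ndm$, and a further sesquilinear polarization in the two arguments gives weak convergence $d\nu_k \rightharpoonup d\nu$ in the sense of Definition~\ref{Definition weak convergence}.

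The main obstacle is the bookkeeping required to pass from Prohorov's scalar conclusion --- which provides weak convergence one positive measure at a time --- to a coherent operator-valued weak limit. The device that makes this manageable is that since $V$ is finite-dimensional, polarization already recovers the full operator-valued measure from a finite set of scalar weak limits, so a finite diagonal extraction suffices; negative-definiteness of the limit then comes for free because the cone of positive Borel measures is weakly closed.
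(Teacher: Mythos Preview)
Your proof is correct and reaches the same conclusion, but the route differs in a meaningful way from the paper's. The paper proceeds entry-by-entry: for each pair $(i,j)$ it takes the complex measure $d\prec \mathfrak{e}_i \mid -\nu_k\,\mathfrak{e}_j\succ$, splits it into real and imaginary parts, applies the Jordan decomposition to obtain four positive measures, and invokes Prohorov on each of these. Having reassembled the limit into an operator-valued measure, the paper then has to argue separately that the limit is negative definite, which it does by contradiction using regularity of the limit measures and a partition-of-unity cutoff. By contrast, you exploit the negative-definite structure from the outset: the scalar measures $d\prec u\mid -\nu_k\,u\succ$ are already positive, so Prohorov applies directly without any Jordan decomposition, and polarization over the finite set $\mathcal U$ recovers all matrix entries. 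The payoff is that negative-definiteness of the limit drops out immediately from weak closedness of the positive cone, with no need for the paper's regularity/cutoff argument. The paper's decomposition is more generic (it would work for arbitrary operator-valued measures, not just negative-definite ones), whereas your argument is leaner precisely because it uses the hypothesis that $d\nu_k\in\ndm$.
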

\begin{proof}
	The main idea is to apply Prohorov's theorem. More precisely, let~$(\mathfrak{e}_i)_{i=1, \ldots, 2n}$ be a pseudo-orthonormal basis of~$V$ satisfying~\eqref{(3.13)}, and for every~$k \in \N$ we denote by~$d|\nu_k|$ the corresponding variation of~$d\nu_k$ according to Definition~\ref{Definition variation}. Decomposing the complex measure~$d\prec \mathfrak{e}_i \mid -\nu_k \: \mathfrak{e}_j \succ$ 
	into its real and imaginary part, 
	\begin{align*}
	{d\prec \mathfrak{e}_i \mid -\nu_k \: \mathfrak{e}_j \succ} = \re d\prec \mathfrak{e}_i \mid -\nu_k \: \mathfrak{e}_j \succ + \: i \im d\prec \mathfrak{e}_i \mid -\nu_k \: \mathfrak{e}_j \succ \:, 
	\end{align*}
	and introducing the (positive) measures
	\begin{align*}
	d\Re_{[i,j], k}^{\pm} := \re \: d\prec \mathfrak{e}_i \mid -\nu_k \: \mathfrak{e}_j \succ^{\pm} \qquad \text{and} \qquad d\Im_{[i,j], k}^{\pm} := \im \: d\prec \mathfrak{e}_i \mid -\nu_k \: \mathfrak{e}_j \succ^{\pm} 
	\end{align*}
	by applying the Jordan decomposition~\cite[\S 29]{halmosmt}, 
	we arrive at
	\begin{align*}
	{d\prec \mathfrak{e}_i \mid -\nu_k \: \mathfrak{e}_j \succ} = d\Re_{[i,j], k}^{+} - d\Re_{[i,j], k}^{-} + \: i \: d\Im_{[i,j], k}^{+} - \: i \: d\Im_{[i,j], k}^{-}
	\end{align*}
	for all~$i,j \in \{ 1, \ldots, 2n \}$ and each~$k \in \N$. Then the conditions~(a) and~(b) imply that the sequences~$(d\Re_{[i,j], k}^{\pm})_{k \in \N}$ and~$(d\Im_{[i,j], k}^{\pm})_{k \in \N}$ are bounded and uniformly tight for all~$i,j = 1, \ldots, 2n$. Iteratively applying Prohorov's theorem, we deduce that~$(d\nu_k)_{k \in \N}$ contains a subsequence (which for convenience we again denote by~$(d\nu_k)_{k \in \N}$) such that the corresponding sequences~$(d\Re_{[i,j], k}^{\pm})_{k \in \N}$ and~$(d\Im_{[i,j], k}^{\pm})_{k \in \N}$ weakly converge to (positive) measures~$d\Re_{[i,j]}^{\pm}$ and~$d\Im_{[i,j]}^{\pm}$, respectively, i.e. 
	\begin{align*}
	d\Re_{[i,j], k}^{\pm} \rightharpoonup d\Re_{[i,j]}^{\pm} \qquad \text{and} \qquad d\Im_{[i,j], k}^{\pm} \rightharpoonup d\Im_{[i,j]}^{\pm}
	\end{align*}
	for all~$i,j \in \{ 1, \ldots, 2n \}$ and every~$k \in \N$. Introducing the measures
	\begin{align*}
	d\nu_{i,j} := d\Re_{[i,j]}^{+} - d\Re_{[i,j]}^{-} + \: i \: d\Im_{[i,j]}^{+} - \: i \: d\Im_{[i,j]}^{-} \qquad \text{for all~$i,j \in \{1, \ldots, 2n\}$} \:, 
	\end{align*}
	for every~$f \in C_b(\hat{K})$ we obtain weak convergence
	\begin{align*}
	\lim_{k \to \infty} \int_{\hat{K}} f \: {d\prec \mathfrak{e}_i \mid - \nu_k \: \mathfrak{e}_j \succ} = \int_{\hat{K}} f \: d\nu_{i,j} \qquad \text{for all~$i,j \in \{1, \ldots, 2n\}$} \:.  
	\end{align*}
	Following the proof of Lemma~\ref{Lemma OVM Banach space}, we introduce the operator-valued measure~$d\nu$ for every~$\Omega \in \B(\hat{K})$ by
	\begin{align*}%\label{(omega)}
	\nu(\Omega) := \begin{pmatrix}
	&\nu_{1,1}(\Omega) &\cdots &\nu_{1, 2n}(\Omega) \\
	&\vdots & \ddots&\vdots \\
	&\nu_{n,1}(\Omega) &\cdots &\nu_{n,2n}(\Omega) \\
	&-\nu_{n+1,1}(\Omega) &\cdots &-\nu_{n+1,2n}(\Omega) \\
	&\vdots & \ddots&\vdots\\
	&-\nu_{1,2n}(\Omega) & \cdots &-\nu_{2n,2n}(\Omega)
	\end{pmatrix} \in \LL(V) \:. 
	\end{align*}
	The measure~$d\nu$ has the property that, for all~$i,j \in \{1, \ldots, 2n\}$, 
	\begin{align*}
	{d\prec \mathfrak{e}_i \mid \nu \: \mathfrak{e}_j \succ} = d\langle \mathfrak{e}_i \mid S \: \nu \: \mathfrak{e}_j \rangle = d\nu_{i,j} \in \mathbf{M}_{\C}(\hat{K})
	\end{align*}
	is a complex measure. For elements~$u= \sum_{m=1}^{2n} \alpha_j(u) \: \mathfrak{e}_j$ and~$v = \sum_{m=1}^{2n} \alpha_j(v) \: \mathfrak{e}_j$ in~$V$,  
	by linearity we conclude that~${d\prec u \mid \nu \: v \succ} \in \mathbf{M}_{\C}(\hat{K})$ for all~$u,v \in V$. Hence~$d\nu$ is an operator-valued measure in the sense of Definition~\ref{Definition Operator-valued measure}, and by linearity we arrive at 
	\begin{align*}
	&\lim_{k \to \infty} \int_{\hat{K}} f \: {d\prec u \mid - \nu_k \: v \succ} = \lim_{k \to \infty} \sum_{\ell,m=1}^{2n} \overline{\alpha_{\ell}(u)} \: \alpha_m(v) \: \int_{\hat{K}} f \: {d\prec \mathfrak{e}_{\ell} \mid - \nu_k \: \mathfrak{e}_m \succ} \\
	&\qquad = \sum_{\ell,m=1}^{2n} \overline{\alpha_{\ell}(u)} \: \alpha_m(v) \: \int_{\hat{K}} f \: {d\prec \mathfrak{e}_{\ell} \mid - \nu \: \mathfrak{e}_m \succ} = \int_{\hat{K}} f \: {d\prec u \mid - \nu \: v \succ} 
	\end{align*}
	for all~$f \in C_b(\hat{K})$ and~$u, v \in V$. This yields weak convergence~$d\nu_k \rightharpoonup d\nu$ of operator-valued measures in the sense of Definition~\ref{Definition weak convergence}. In particular,~$d\|\nu\| < \infty$.

	It remains to show that~$d\nu$ is indeed negative definite. To this end, we need to prove that~${d\prec u \mid - \nu \: u \succ}$ is a positive measure for all~$u \in V$. We point out that, by assumption, the measures~$d\prec u \mid -\nu_k \: u \succ$ are positive for each~$u \in V$ and all~$k \in \N$. Assume now, for some~$u \in V$, that~$d\mu_u := d\prec u \mid -\nu \: u \succ$ is a signed measure with
	\begin{align*}
	d\mu_u = d\mu_u^+ - d\mu_u^- 
	\end{align*}
	such that~$d\mu_u^{-}$ is non-zero. In this case, there is~$\Omega \in \B(\hat{K})$ with the property that
	\begin{align*}
	\mu_u^+(\Omega) < \mu_u^-(\Omega)
	\end{align*}
	(assuming conversely that~$\mu_u^+(\Omega) \ge \mu_u^-(\Omega)$ for all~$\Omega \in \B(\hat{K})$, then the  measure~$d\mu_u$ is non-negative, implying that~$d\mu_u^-  = 0$). In virtue of Ulam's theorem we know that both measures~$d\mu_u^{\pm}$ are regular on~$\hat{K}$. As a consequence, there is an open set~$U \supset \Omega$ and a compact set~$K \subset \Omega$ such that~$\mu_u^+(U) < \mu_u^-(K)$. 
	Hence a partition of unity yields a function~$f \in C_c(U;[0,1])$ with~$\supp f \subset U$ and~$f|_K \equiv 1$, thus giving rise to the contradiction
	\begin{align*}
	0 \le \lim_{k \to \infty} \int_{\hat{K}} f \: {d\prec u \mid - \nu_k \: u \succ} = \int_{\hat{K}} f \: {d\prec u \mid - \nu \: u \succ} \le \mu_u^+(U) - \mu_u^-(K) < 0 \:. 
	\end{align*}
	This completes the proof. 
\end{proof}

\subsection{Proof of Existence Theorem}\label{S existence} 
In order for proving Theorem~\ref{Theorem minimizer}, 
we require some more preparatory results. The proof of Theorem~\ref{Theorem minimizer} will be completed towards the end of this subsection. 
To begin with, let us state the following proposition. 

\begin{Prp}\label{Proposition Fatou}
	Let~$(d\nu^{(k)})_{j \in \N}$ be a sequence of negative definite measures in~$\ndm$ which converges weakly to some negative definite measure~$d\nu \in \ndm$. Then
	\begin{align*}
	\lim_{j \to \infty} \L[\nu^{(j)}](\xi) = \L[\nu](\xi) \qquad \text{for all~$\xi \in \scrM$}
	\end{align*}
	and~$$\Sact(\nu) \le \liminf_{j \to \infty} \Sact(\nu^{(j)}) \:. $$
\end{Prp}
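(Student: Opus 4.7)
The plan is to first establish pointwise convergence $\L[\nu^{(j)}](\xi) \to \L[\nu](\xi)$ for every $\xi \in \scrM$ by tracing the weak convergence of the measures through each step in the definition of the Lagrangian, and then to use non-negativity of $\L$ together with Fatou's lemma to obtain the lower semicontinuity estimate for the action.

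First I would fix $\xi \in \scrM$ and note that the function $k \mapsto e^{ik\xi}$ lies in $C_b(\hat{K})$. By Definition~\ref{Definition weak convergence} (and the polarization identity~\eqref{(polarization)} that extends weak convergence of negative definite measures to all sesquilinear components), for every $u, v \in V$ we have
\[
\lim_{j \to \infty} \int_{\hat{K}} e^{ik\xi} \: d\prec u \mid \nu^{(j)} \: v \succ = \int_{\hat{K}} e^{ik\xi} \: d\prec u \mid \nu \: v \succ \:.
\]
In view of Definition~\ref{Definition integral negative} and~\eqref{(fermionic projector)}, this gives ${\prec u \mid P[\nu^{(j)}](\xi) \: v \succ} \to {\prec u \mid P[\nu](\xi) \: v \succ}$ for all $u, v \in V$. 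Non-degeneracy of the indefinite inner product thus yields entrywise convergence $P[\nu^{(j)}](\xi) \to P[\nu](\xi)$ in $\LL(V)$. Applying the same argument at $-\xi$ and using bilinearity of matrix multiplication, the closed chains converge entrywise:
\[
A[\nu^{(j)}](\xi) = P[\nu^{(j)}](\xi) \: P[\nu^{(j)}](-\xi) \:\longrightarrow\: P[\nu](\xi) \: P[\nu](-\xi) = A[\nu](\xi) \:.
\]

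Next, I would invoke the fact that the roots of a polynomial depend continuously on its coefficients, counted with algebraic multiplicities. Consequently, the eigenvalues $\lambda_i(B)$ of a matrix $B \in \LL(V)$ depend continuously on the entries of $B$, and the spectral weight $|B| = \sum_{i=1}^{2n} |\lambda_i(B)|$ is a continuous function of $B$. Applying this to the sequences $A[\nu^{(j)}](\xi)$ and $A[\nu^{(j)}](\xi)^2$ (both convergent in $\LL(V)$), we obtain
\[
\left|A[\nu^{(j)}](\xi) \right| \longrightarrow \left|A[\nu](\xi) \right| \qquad \text{and} \qquad \left|A[\nu^{(j)}](\xi)^2 \right| \longrightarrow \left|A[\nu](\xi)^2 \right| \:,
\]
which by the definition of $\L[\omega](\xi)$ proves $\L[\nu^{(j)}](\xi) \to \L[\nu](\xi)$ for every $\xi \in \scrM$.

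Finally, since $\L[\nu^{(j)}](\xi) \ge 0$ for all $\xi \in \scrM$ and $j \in \N$, Fatou's lemma applied to the sequence of non-negative Borel measurable functions $\xi \mapsto \L[\nu^{(j)}](\xi)$ with respect to Lebesgue measure $d\mu$ yields
\[
\Sact(\nu) = \int_{\scrM} \lim_{j \to \infty} \L[\nu^{(j)}](\xi) \: d\mu(\xi) \le \liminf_{j \to \infty} \int_{\scrM} \L[\nu^{(j)}](\xi) \: d\mu(\xi) = \liminf_{j \to \infty} \Sact(\nu^{(j)}) \:,
\]
as claimed. The only delicate point is the continuity of the spectral weight, which is not entirely automatic because matrices in $\LL(V)$ need not be diagonalizable; however, as noted above, this follows from continuous dependence of polynomial roots on coefficients once eigenvalues are counted with algebraic multiplicities, so no additional hypothesis is required.
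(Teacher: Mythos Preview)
Your proof is correct and follows essentially the same route as the paper: weak convergence gives entrywise convergence of $P[\nu^{(j)}](\xi)$ and hence of $A[\nu^{(j)}](\xi)$, continuity of the spectral weight yields pointwise convergence of the Lagrangian, and Fatou's lemma handles the action. Your additional remarks on why the spectral weight is continuous and on the non-negativity hypothesis in Fatou's lemma are welcome clarifications but do not change the underlying argument.
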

\begin{proof}
	Let us first consider the behavior of the kernel of the fermionic projector and the closed chain. 
	For convenience, we introduce the notation~$P_j(\xi) := P[\nu^{(j)}](\xi)$ as well as~$A_j(\xi) := A[\nu^{(j)}](\xi)$ for all~$j \in \N$ and arbitrary~$\xi \in \scrM$. Then weak convergence (see Definition~\ref{Definition weak convergence} and the remark thereafter) implies that
	\begin{align*}
	\lim_{j \to \infty} {\prec u \mid P_j(\xi) \: v \succ} &= \lim_{j \to \infty} \int_{\hat{K}} e^{ik\xi} \: {d\prec u \mid \nu^{(j)}(\xi) \: v \succ} \\
	&= \int_{\hat{K}} e^{ik\xi} \: {d\prec u \mid \nu(\xi) \: v \succ} = {\prec u \mid P[\nu](\xi) \: v \succ}
	\end{align*}
	for all~$u,v \in V$ and arbitrary~$\xi \in \scrM$. 
	Given a pseudo-orthonormal basis~$(\mathfrak{e}_i)_{i = 1, \ldots, 2n}$ of~$V$ satisfying~\eqref{(3.13)}, we thus obtain
	\begin{align*}
	\lim_{j \to \infty} \langle \mathfrak{e}_{\alpha} \mid P_j(\xi) \? \mathfrak{e}_{\beta} \rangle = \lim_{j \to \infty} {\prec S\mathfrak{e}_{\alpha} \mid P_j(\xi) \? \mathfrak{e}_{\beta} \succ} = {\prec S\mathfrak{e}_{\alpha} \mid P[\nu](\xi) \? \mathfrak{e}_{\beta} \succ} = \langle \mathfrak{e}_{\alpha} \mid P_j(\xi) \? \mathfrak{e}_{\beta} \rangle 
	\end{align*}
	for all~$\alpha, \beta \in \{1, \ldots, 2n \}$ and arbitrary~$\xi \in \scrM$. From this we deduce that
	\begin{align*}
	\lim_{j \to \infty} (A_j(\xi))_{\alpha, \beta} &= \lim_{j \to \infty} \big(P_j(\xi) \? P_j(-\xi)\big)_{\alpha, \beta} = \big(P[\nu](\xi) \? P[\nu](-\xi)\big)_{\alpha, \beta} = (A[\nu](\xi))_{\alpha, \beta}		
	\end{align*}
	for all~$\alpha, \beta \in \{1, \ldots, 2n \}$ and arbitrary~$\xi \in \scrM$. By continuity of the spectral weight, 
	\begin{align*}
	\lim_{j \to \infty} \L[\nu^{(j)}](\xi) = \L[\nu](\xi) \qquad \text{for all~$\xi \in \scrM$} \:.
	\end{align*}
	The second statement follows from Fatou's lemma (see e.g.~\cite[Theorem~16.4]{jost-postmodern}), 
	\begin{align*}
	\Sact(\nu) = \int_{\scrM} \L[\nu](\xi) \: d\mu(\xi) = \int_{\scrM} \liminf_{j \to \infty} \L[\nu^{(j)}](\xi) \: d\mu(\xi) \le \liminf_{j \to \infty} \int_{\scrM} \L[\nu^{(j)}](\xi) \: d\mu(\xi) \:.
	\end{align*}
	This completes the proof. 
\end{proof}

\begin{Prp}\label{Proposition Tr}
	Let~$(d\nu^{(j)})_{j \in \N}$ be a sequence of negative definite measures in~$\ndm$ which converges weakly to some negative definite measure~$d\nu \in \ndm$. Then 
	\begin{align*}
	\lim_{j \to \infty} \Tr_V(\nu^{(j)}(\hat{K})) = \Tr_V(\nu(\hat{K}))
	\end{align*}
	as well as
	\begin{align*}
	\lim_{j \to \infty} \Tr_V(-S\nu^{(j)}(\hat{K})) = \Tr_V(-S\nu(\hat{K}))  \qquad \text{and} \qquad \lim_{j \to \infty} |\nu^{(j)}(\hat{K})| = |\nu(\hat{K})| \:. 
	\end{align*}
\end{Prp}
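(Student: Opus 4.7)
The plan is to deduce all three limits from a single underlying fact: weak convergence of $(d\nu^{(j)})$ to $d\nu$ together with the compactness of $\hat{K}$ implies entry-wise convergence $\nu^{(j)}(\hat{K}) \to \nu(\hat{K})$ in the finite-dimensional space $\LL(V)$. Once this is established, the three claims follow from continuity of the respective functionals on $\LL(V)$.

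To establish the underlying matrix convergence, I would exploit the fact that $f \equiv 1$ lies in $C_b(\hat{K})$ because $\hat{K}$ is compact. Plugging this into Definition~\ref{Definition weak convergence} with $u = \mathfrak{e}_i$ and $v = \mathfrak{e}_k$ for a fixed pseudo-orthonormal basis $(\mathfrak{e}_i)_{i=1,\ldots,2n}$ of $V$ satisfying~\eqref{(3.13)} yields
\[ \lim_{j \to \infty} {\prec \mathfrak{e}_i \mid \nu^{(j)}(\hat{K}) \: \mathfrak{e}_k \succ} = {\prec \mathfrak{e}_i \mid \nu(\hat{K}) \: \mathfrak{e}_k \succ} \qquad \text{for all } i,k \in \{1, \ldots, 2n\}\:. \]
Using ${\prec u \mid w \succ} = \langle u \mid S w \rangle$ together with $S\mathfrak{e}_i = s_i \mathfrak{e}_i$ and $s_i \in \{+1,-1\}$, this translates to convergence of every matrix entry $\langle \mathfrak{e}_i \mid \nu^{(j)}(\hat{K}) \: \mathfrak{e}_k \rangle$. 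Since $\LL(V)$ is finite-dimensional, this is equivalent to convergence $\nu^{(j)}(\hat{K}) \to \nu(\hat{K})$ with respect to any operator norm.

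Since the functionals $A \mapsto \Tr_V(A)$ and $A \mapsto \Tr_V(-SA)$ are continuous (indeed linear) on the finite-dimensional space $\LL(V)$, the first two claimed limits are immediate. For the third, I would appeal to continuity of the spectral weight: the coefficients of the characteristic polynomial $\det(\lambda\, \Id - A)$ depend polynomially on the entries of $A$, and the complex roots of a polynomial, counted with algebraic multiplicity, depend continuously on its coefficients (a standard consequence of Rouch\'e's theorem). Consequently $A \mapsto |A|$ is continuous on $\LL(V)$, and the third limit follows. This is precisely the continuity property already invoked in the proof of Proposition~\ref{Proposition Fatou}, so no new input is required. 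I do not anticipate any serious obstacle; the only conceptual step is the reduction of weak convergence to matrix convergence using compactness of $\hat{K}$, and the remainder is pure finite-dimensional continuity.
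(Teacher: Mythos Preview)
Your proposal is correct and follows essentially the same route as the paper: both reduce the problem to entry-wise convergence $\nu^{(j)}(\hat{K}) \to \nu(\hat{K})$ by testing weak convergence against the constant function $f\equiv 1 \in C_b(\hat{K})$, and then invoke continuity of the relevant functionals on the finite-dimensional space $\LL(V)$. The only cosmetic difference is that for the spectral weight the paper appeals to Kato's perturbation result (Lemma~\ref{Lemma Kato}) for continuity of eigenvalues, whereas you argue via continuity of the roots of the characteristic polynomial (Rouch\'e); these are equivalent justifications of the same standard fact.
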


For proving the last assertion, we require the next lemma: 
\begin{Lemma}\label{Lemma Kato}
	Let~$W$ be a finite-dimensional vector space and let~$T \in \LL(W)$. Then for any sequence~$(T_n)_{n \in \N}$ of operators in~$\LL(W)$ with~$\|T_n - T\| \to 0$ as~$n \to \infty$ (where~$\|. \|$ denotes any norm on~$W$), the eigenvalues of~$T_n$ converge to those of~$T$. 
\end{Lemma}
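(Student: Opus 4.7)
The plan is to reduce the statement to the classical fact that the roots of a monic polynomial depend continuously on its coefficients. First, since $W$ is finite-dimensional, all norms on $\LL(W)$ are equivalent, so after fixing a basis of $W$ I may assume without loss of generality that $\|\cdot\|$ controls the matrix entries; thus $\|T_n - T\| \to 0$ implies entrywise convergence of the matrix representations of $T_n$ to that of $T$.

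Next, I pass to the characteristic polynomials $\chi_{T_n}(\lambda) := \det(\lambda \Id - T_n)$ and $\chi_T(\lambda) := \det(\lambda \Id - T)$. Both are monic of degree $d := \dim W$, and each coefficient is, up to a sign, a sum of principal minors of the respective matrix and hence a polynomial expression in its entries. Entrywise convergence $T_n \to T$ therefore forces each coefficient of $\chi_{T_n}$ to converge to the corresponding coefficient of $\chi_T$.

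Finally, I invoke the classical continuity of the roots of a monic polynomial with respect to its coefficients: if a sequence of monic polynomials of fixed degree $d$ has coefficients converging to those of a limit monic polynomial $p$, then the roots (counted with algebraic multiplicity) can be enumerated so as to converge to those of $p$. The standard proof proceeds via Rouch\'e's theorem, applied separately on a small disk around each root of $p$: for $n$ sufficiently large, exactly $m_\lambda$ roots of $\chi_{T_n}$ lie in a disk of radius~$\varepsilon$ around a root~$\lambda$ of $\chi_T$ of algebraic multiplicity~$m_\lambda$. Applied to $(\chi_{T_n})_{n \in \N}$ and $\chi_T$, this yields the desired convergence of eigenvalues.

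The only subtlety is notational, namely the precise formulation of ``convergence of eigenvalues'' given that the spectrum is an unordered multiset; this is handled by enumerating the eigenvalues of~$T_n$ and $T$ (with algebraic multiplicities) as $\lambda^{(n)}_1, \ldots, \lambda^{(n)}_d$ and $\lambda_1, \ldots, \lambda_d$ respectively, in such a way that $\lambda^{(n)}_i \to \lambda_i$ for every $i \in \{1, \ldots, d\}$. I do not expect any genuine analytic obstacle beyond this bookkeeping step.
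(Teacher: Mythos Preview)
Your argument is correct. The paper does not give a self-contained proof of this lemma at all; it simply cites Kato, \emph{Perturbation Theory for Linear Operators}, Chapter~II, \S 5--1. Your route via the characteristic polynomial and Rouch\'e's theorem is a standard and fully valid way to establish the result, and in fact is close in spirit to what Kato does: his treatment counts eigenvalues inside small contours using resolvent (spectral-projection) integrals rather than the characteristic polynomial directly, but the underlying complex-analytic mechanism is the same. Your version has the advantage of being entirely elementary and self-contained, at the cost of being specific to finite dimensions; Kato's resolvent approach generalizes more readily to closed operators on infinite-dimensional spaces, which is why it is the natural reference here even though only the finite-dimensional case is needed.
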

\begin{proof}
	See~\cite[Chapter~II, \S 5-1]{kato}. 
\end{proof}

\begin{proof}[Proof of Proposition~\ref{Proposition Tr}]
	By weak convergence, the first two equalities can be verified as follows:
	\begin{align*}
	\lim_{j \to \infty}  \Tr_V(\nu^{(j)}(\hat{K})) &= \lim_{j \to \infty} \sum_{\alpha = 1}^{2n} \langle \mathfrak{e}_{\alpha} \mid \nu^{(j)}(\hat{K}) \? \mathfrak{e}_{\alpha} \rangle = \lim_{j \to \infty} \sum_{\alpha = 1}^{2n} \int_{\hat{K}} {d\prec S\mathfrak{e}_{\alpha} \mid \nu^{(j)}(p) \? \mathfrak{e}_{\alpha} \succ} \\
	&= \sum_{\alpha = 1}^{2n} \int_{\hat{K}} {d\prec S\mathfrak{e}_{\alpha} \mid \nu(p) \? \mathfrak{e}_{\alpha} \succ} = \sum_{\alpha = 1}^{2n} \langle \mathfrak{e}_{\alpha} \mid \nu(\hat{K}) \? \mathfrak{e}_{\alpha} \rangle = \Tr_V(\nu(\hat{K})) \:, 
	\end{align*}
	and analogously 
	\begin{align*}
	\lim_{j \to \infty}  \Tr_V(-S\nu^{(j)}(\hat{K})) 
	= \Tr_V(-S\nu(\hat{K})) \:. 
	\end{align*}
	In order to prove the remaining equality, we essentially make use of the fact that the spectral weight is continuous. More precisely, by continuity of the absolute value and weak convergence we obtain 
	\begin{align*}
	&\lim_{j \to \infty} \|\nu^{(j)}(\hat{K})- \nu(\hat{K})\|_{1} \le \lim_{j \to \infty} \sum_{\alpha, \beta =1}^{2n} \left|\langle \mathfrak{e}_{\alpha} \mid \big(\nu^{(j)}(\hat{K})- \nu(\hat{K})\big) \? \mathfrak{e}_{\beta} \rangle\right|  \\
	&\qquad = \lim_{j \to \infty} \sum_{\alpha, \beta =1}^{2n} \left|\int_{\hat{K}} d\prec S\mathfrak{e}_{\alpha} \mid \nu^{(j)}(k) \? \mathfrak{e}_{\beta} \succ - \int_{\hat{K}} d\prec S \mathfrak{e}_{\alpha} \mid \nu(k) \? \mathfrak{e}_{\beta} \succ\right| = 0
	\end{align*}
	(where~$\|.\|_1$ is given by~\eqref{(Spaltensummennorm)}). 
	Denoting the eigenvalues of~$\nu(\hat{K})$ by~$(\lambda_i)_{i=1, \ldots, 2n}$ and those of~$\nu^{(j)}(\hat{K})$ for every~$j \in \N$ by~$(\lambda_i^{(j)})_{i=1, \ldots, 2n}$, by applying Lemma~\ref{Lemma Kato} together with the inverse triangle inequality we thus arrive at 
	\begin{align*}
	\lim_{j \to \infty} \left||\nu^{(j)}(\hat{K})| - |\nu(\hat{K})|\right| \le \lim_{j \to \infty} \sum_{i=1}^{2n} \left| |\lambda^{(j)}_i| - |\lambda_i| \right| \le \lim_{j \to \infty} \sum_{i=1}^{2n} | \lambda^{(j)}_i - \lambda_i | = 0 \:.
	\end{align*}
	This completes the proof. 
\end{proof}

After these preliminaries we are finally in the position to prove Theorem~\ref{Theorem minimizer}.

\begin{proof}[Proof of Theorem~\ref{Theorem minimizer}]
	Let us first assume that the side conditions~\eqref{(constraints spectral)} are satisfied. In this case, 
	Lemma~\ref{Lemma spectral weight bounded} yields a sequence of unitary operators~$(U_j)_{j \in \N}$ in~$\LL(V)$ (with respect to~$\prec . \mid . \succ$) as well as a constant~$C > 0$ such that 
	\begin{align*}
	d\|U_j \? \nu^{(j)} \? U_j^{-1}\| \le C \qquad \text{for all~$j \in \N$} \:. 
	\end{align*}
	Since~$\hat{K} \subset \hscrM$ is compact, the sequence of measures~$(d\nu^{(j)})_{j \in \N}$ is uniformly tight.  
	As a consequence, we may apply Lemma~\ref{Lemma negative} in order to conclude that a subsequence of~$(U_j \? d\nu^{(j)} \? U_j^{-1})_{j \in \N}$ converges weakly to some negative definite measure~$d\nu \in \ndm$, 
	\begin{align*}
	d\tilde{\nu}^{(j_k)} := U_{j_k} \? d\nu^{(j_k)} \? U_{j_k}^{-1} \rightharpoonup d\nu \qquad \text{weakly} \:.
	\end{align*}
	Making use of~\eqref{(unitary Lagrangian)}, from Proposition~\ref{Proposition Fatou} we deduce that 
	\begin{align*}
	\Sact(\nu) \le \lim_{k \to \infty} \Sact(\tilde{\nu}^{(j_k)}) = \lim_{k \to \infty} \Sact({\nu}^{(j_k)}) \:. 
	\end{align*}
	In the case that the constraints~\eqref{(constraints)} are imposed, the above arguments remain valid by applying Lemma~\ref{Lemma uniform boundedness} instead of Lemma~\ref{Lemma spectral weight bounded} and choosing~$U_j = \Id_V$ for all~$j \in\N$. 
	
	Thus it only remains to prove that the measure~$d\nu$ satisfies the conditions~\eqref{(limiting conditions)} or~\eqref{(limiting conditions spectral)}, respectively. In both cases, this follows readily from Proposition~\ref{Proposition Tr}. In particular, the limit measure~$d\nu$ is non-trivial, which completes the proof. 
\end{proof}

As worked out in the next subsection, Theorem~\ref{Theorem minimizer} also holds in the case that the side conditions~\eqref{(f)} and~\eqref{(swf)} are replaced by a boundedness constraint in the fashion of~\cite[Section~4]{continuum}.

\subsection{Imposing a Boundedness Constraint}\label{S boundedness constraint}
Let us finally establish a connection to the boundedness constraint as considered in~\cite[Section~4]{continuum} (which originally was proposed in~\cite[eq.~(3.5.10)]{pfp} as a constraint for the causal action principle). In the homogeneous setting, for any operator-valued measure~$d\omega \in \ovm$ we introduce the mapping~$\mathfrak{t}[\omega] : \scrM \to \R_0^+$ by  
\begin{align*}
\mathfrak{t}[\omega](\xi) := \left|A[{\omega}](\xi)\right|^2 \qquad \text{for all~$\xi \in \scrM$} \:. 
\end{align*} 
We then define the functional~$\T : \ovm \to \R_0^+ \cup \{+ \infty \}$ by 
\begin{align*}%\label{(T)}
\T(\omega) := \int_{\scrM} \mathfrak{t}[{\omega}](\xi)\: {d}\mu(\xi) =  \int_{\scrM} \left|A[{\omega}](\xi)\right|^2\: {d}\mu(\xi) \:.
\end{align*}
Given~$C > 0$, the corresponding \emph{boundedness constraint} reads 
\begin{align}\label{(boundedness constraint)}
\T(\omega) \le C \:.
\end{align} 
In analogy to Theorem~\ref{Theorem minimizer} we then obtain the following existence result: 

\begin{Thm}\label{Theorem boundedness}
	Assume that~$(d\nu^{(j)})_{j \in \N}$ is a minimizing sequence of negative definite measures in~$\ndm$ for the causal variational principle~\eqref{(causal action principle homogeneous)} with respect to the side conditions~\eqref{(c)} and~\eqref{(boundedness constraint)} for some positive constants~$c, C > 0$. Then there exists a sequence of unitary operators~$(U_j)_{j \in \N}$ on $V$ (with respect to $\prec . \mid . \succ$) as well as a  subsequence~$(d\nu^{(j_k)})_{k \in \N}$ such that the sequence~$(U_{j_k} \: d\nu^{(j_k)} \: U_{j_k}^{-1})_{k \in \N}$ converges weakly to some non-trivial negative definite measure~$d\nu \not= 0$. Moreover, 
	\begin{align*}
	\Sact(\nu) \le \liminf_{k \to \infty} \Sact(\nu^{(j_k)}) \:,
	\end{align*}
	and the limit measure~$d\nu \in \ndm$ satisfies the side conditions
	\begin{align*}
	\Tr_V(\nu(\hat{K})) = c \qquad \text{and} \qquad \T(\nu) \le C \:. 
	\end{align*}
	In particular, the limit measure~$d\nu$ is a non-trivial minimizer of the causal variational principle~\eqref{(causal action principle homogeneous)} with respect to the side conditions~\eqref{(c)} and~\eqref{(boundedness constraint)}.
\end{Thm}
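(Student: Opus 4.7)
The plan is to mirror the proof of Theorem~\ref{Theorem minimizer}, replacing the auxiliary boundedness results (Lemma~\ref{Lemma uniform boundedness} and Lemma~\ref{Lemma spectral weight bounded}) by an analogous statement that produces a uniform bound on the total variation from the boundedness constraint~\eqref{(boundedness constraint)} together with the trace constraint~\eqref{(c)}. Concretely, given a minimizing sequence~$(d\nu^{(j)})_{j \in \N}$, the first task is to extract a sequence of unitary operators~$(U_j)_{j \in \N}$ on~$(V, \prec . \mid . \succ)$ such that the norms~$d\|U_j \? \nu^{(j)} \? U_j^{-1}\|$ are uniformly bounded in~$j$.

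The core technical step, which I expect to be the main obstacle, is this new boundedness lemma. The idea is to exploit that at the origin the closed chain satisfies~$A[\nu^{(j)}](0) = \nu^{(j)}(\hat{K})^2$, so its spectral weight equals~$\sum_i |\lambda_i(\nu^{(j)}(\hat{K}))|^2$, and a Cauchy--Schwarz inequality yields~$|\nu^{(j)}(\hat{K})|^2 \le 2n \? |A[\nu^{(j)}](0)|$. It therefore suffices to bound~$|A[\nu^{(j)}](0)|$ in terms of the~$L^2$-type quantity~$\T(\nu^{(j)}) \le C$. To accomplish this one uses that each matrix element of~$A[\nu^{(j)}](\xi)$ is an entire analytic function of~$\xi$ with derivatives controlled by the diameter of the compact set~$\hat{K}$; combining a Taylor expansion around the origin with the trace constraint~$\Tr_V(\nu^{(j)}(\hat{K})) = c$, which pins down the leading behaviour, one obtains a mean-value type estimate on a ball of fixed radius producing~$|A[\nu^{(j)}](0)| \le \kappa$ with~$\kappa$ depending only on~$C$,~$c$ and~$\hat{K}$. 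This recovers the heart of the argument of~\cite[Theorem~4.2]{continuum}; once~$|\nu^{(j)}(\hat{K})|$ is uniformly bounded, Lemma~\ref{Lemma spectral weight bounded} supplies the required unitary operators and total-variation bound.

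With boundedness of the total variation secured, and uniform tightness being automatic on the compact set~$\hat{K}$, Lemma~\ref{Lemma negative} yields a subsequence~$(U_{j_k} \? d\nu^{(j_k)} \? U_{j_k}^{-1})_{k \in \N}$ converging weakly to a negative definite measure~$d\nu \in \ndm$. The identity~$\Sact(U \? \nu \? U^{-1}) = \Sact(\nu)$ from~\eqref{(unitary Lagrangian)} combined with Proposition~\ref{Proposition Fatou} gives~$\Sact(\nu) \le \liminf_{k \to \infty} \Sact(\nu^{(j_k)})$, and Proposition~\ref{Proposition Tr} furnishes the trace identity~$\Tr_V(\nu(\hat{K})) = c$, in particular ensuring that~$d\nu$ is non-trivial.

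Finally, to verify~$\T(\nu) \le C$ for the limit measure, I would establish the analog of Proposition~\ref{Proposition Fatou} for the functional~$\T$. The same componentwise argument used there shows that the matrix entries of the closed chain converge pointwise, and by continuity of the spectral weight~$\mathfrak{t}[U_{j_k} \? \nu^{(j_k)} \? U_{j_k}^{-1}](\xi) \to \mathfrak{t}[\nu](\xi)$ for every~$\xi \in \scrM$. Fatou's lemma, together with the invariance~$\T(U \? \nu \? U^{-1}) = \T(\nu)$ (which follows from the same unitary-similarity argument as for~$\Sact$), then yields~$\T(\nu) \le \liminf_{k \to \infty} \T(\nu^{(j_k)}) \le C$, completing the proof.
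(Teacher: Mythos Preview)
Your overall architecture matches the paper's proof exactly: reduce the boundedness constraint~\eqref{(boundedness constraint)} to a uniform bound on the spectral weight~$|\nu^{(j)}(\hat{K})|$, then feed this into Lemma~\ref{Lemma spectral weight bounded} and the machinery of Theorem~\ref{Theorem minimizer} (Lemma~\ref{Lemma negative}, Proposition~\ref{Proposition Fatou}, Proposition~\ref{Proposition Tr}), and finish with a Fatou argument for~$\T$. The paper packages the first step as Proposition~\ref{Proposition boundedness} and then simply invokes Theorem~\ref{Theorem minimizer}; your outline reproduces precisely this.

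Two remarks on the one place where you diverge. First, the paper's Proposition~\ref{Proposition boundedness} obtains~$|\nu(\hat{K})|\le f$ from~\eqref{(boundedness constraint)} \emph{alone}, without using the trace constraint~\eqref{(c)}; your invocation of~\eqref{(c)} to ``pin down the leading behaviour'' is unnecessary for this step. Second, your proposed mechanism---bounding~$|A[\nu^{(j)}](0)|$ via a Taylor expansion whose derivative bounds depend only on~$\operatorname{diam}\hat{K}$---has a circularity hazard: the derivatives of~$\xi\mapsto P[\nu^{(j)}](\xi)$ are controlled by~$\operatorname{diam}\hat{K}$ \emph{times} the total variation~$d\|\nu^{(j)}\|$, which is exactly the quantity you are trying to bound. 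The paper avoids this by a different route: it uses continuity of~$\xi\mapsto|A[\nu](\xi)|^2$ together with the Lebesgue differentiation theorem to control the pointwise value~$|A[\nu](0)|^2$ directly by the integral~$\T(\nu)\le C$, and then the same Cauchy--Schwarz/Jensen step you wrote down (combined with the diagonalization of~\cite[Lemma~4.4]{continuum}) yields~$|\nu(\hat{K})|\le f$ with~$f$ depending only on~$n$ and~$C$. Since you explicitly defer to~\cite[Theorem~4.2]{continuum} at this point, your sketch is acceptable, but be aware that the Taylor-expansion justification as stated would not close on its own.
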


For the proof of Theorem~\ref{Theorem boundedness} we make use of the following result:

\begin{Prp}\label{Proposition boundedness}
	Whenever~$d\nu \in \ndm$ is a negative definite measure satisfying the boundedness constraint~\eqref{(boundedness constraint)}, it satisfies condition~\eqref{(swf)} for some constant~$f > 0$. 
\end{Prp}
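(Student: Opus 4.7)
The strategy is to reduce the desired bound on the spectral weight $|\nu(\hat{K})|$ to a pointwise bound on the closed chain $A[\nu](\xi)$ at the single distinguished point $\xi = 0$. Evaluating the kernel of the fermionic projector \eqref{(fermionic projector)} at $\xi = 0$ gives $P[\nu](0) = \int_{\hat{K}} d\nu = \nu(\hat{K})$, so that
\begin{align*}
A[\nu](0) \;=\; P[\nu](0) \, P[\nu](-0) \;=\; \nu(\hat{K})^2 \:.
\end{align*}
Since $d\nu$ is negative definite, the operator $-\nu(\hat{K})$ is positive with respect to $\prec . \mid . \succ$, and hence by \cite[Theorem~5.7.2]{gohberg} self-adjoint with a purely real spectrum. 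Denoting the real eigenvalues of $\nu(\hat{K})$ (counted with algebraic multiplicity) by $\mu_1, \ldots, \mu_{2n}$, the eigenvalues of $A[\nu](0)$ are $\mu_1^2, \ldots, \mu_{2n}^2$.

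Next, Cauchy--Schwarz applied to the eigenvalues yields
\begin{align*}
|\nu(\hat{K})|^2 \;=\; \bigg( \sum_{i=1}^{2n} |\mu_i| \bigg)^2 \;\le\; 2n \, \sum_{i=1}^{2n} \mu_i^2 \;=\; 2n \, \bigl|A[\nu](0)\bigr| \:,
\end{align*}
so the task reduces to estimating the single value $|A[\nu](0)|$ in terms of $\T(\nu) = \int_{\scrM} |A[\nu](\xi)|^2 \, d\mu(\xi) \le C$. For this I plan to exploit that, upon expanding the product $P[\nu](\xi) \, P[\nu](-\xi)$, each matrix entry $A[\nu]_{\alpha \beta}$ is the Fourier transform of a finite scalar measure on $\scrM$ supported on the compact set $\hat{K} - \hat{K}$. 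A Paley--Wiener / Bernstein-type estimate then gives a pointwise $L^\infty$--$L^2$ inequality $|A[\nu]_{\alpha \beta}(0)|^2 \le \kappa(\hat{K}) \, \|A[\nu]_{\alpha \beta}\|_{L^2(\scrM)}^2$ with constant $\kappa(\hat{K})$ depending only on the geometry of $\hat{K} - \hat{K}$. Summing over the matrix indices and invoking the equivalence of norms on the finite-dimensional space $\LL(V)$, one thus bounds $|A[\nu](0)|$ by a constant multiple of $\T(\nu) \le C$. Combining with the Cauchy--Schwarz bound above yields $|\nu(\hat{K})| \le f$ with $f = f(C, n, \hat{K})$, as desired.

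The main obstacle I anticipate lies in this last norm-comparison step: both the pointwise spectral weight $|A[\nu](0)|$ and the integrand $|A[\nu](\xi)|^2$ in $\T(\nu)$ are expressed in terms of eigenvalues rather than matrix entries, while the Paley--Wiener/Bernstein estimate is most naturally stated componentwise. Although all norms on $\LL(V)$ are equivalent, for non-normal matrices the equivalence constants must be traced carefully through the argument to ensure that the resulting bound $f$ is independent of the particular measure $d\nu \in \ndm$ under consideration. Once this has been arranged, the boundedness constraint \eqref{(boundedness constraint)} translates uniformly into condition \eqref{(swf)}, completing the proof.
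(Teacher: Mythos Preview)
Your first reduction is cleaner than the paper's: observing $A[\nu](0)=\nu(\hat K)^{2}$ and applying Cauchy--Schwarz to the (necessarily real) eigenvalues of $\nu(\hat K)$ gives $|\nu(\hat K)|^{2}\le 2n\,|A[\nu](0)|$ in one line, whereas the paper reaches an analogous inequality only after invoking the approximate diagonalization~\cite[Lemma~4.4]{continuum} and carrying along error terms, ending with the weaker constant~$16n$. So up to this point your route is preferable.

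The difficulty is the second step, and the obstacle you anticipate is not a matter of ``tracing constants carefully''---it is fatal to the componentwise Bernstein strategy. The spectral weight $|A|=\sum_{i}|\lambda_{i}|$ is \emph{not a norm} on~$\LL(V)$: it vanishes on every nilpotent operator, and $\prec\cdot\mid\cdot\succ$--self-adjoint operators (which is what $A(\xi)=P(\xi)P(\xi)^{\ast}$ always is) can be nilpotent with arbitrarily large entries; already for $n=1$ the matrix $\big(\begin{smallmatrix}\alpha&\beta\\-\bar\beta&-\alpha\end{smallmatrix}\big)$ with $\alpha\in\R$ and $|\beta|=|\alpha|$ is an example. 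Consequently there is no inequality of the form $\sum_{\alpha,\beta}|A_{\alpha\beta}(\xi)|^{2}\le c\,|A(\xi)|^{2}$, and hence no way to extract an $L^{2}$ bound on the individual entries $A_{\alpha\beta}$ from the hypothesis $\T(\nu)=\int_{\scrM}|A(\xi)|^{2}\,d\mu(\xi)\le C$. Your Paley--Wiener/Bernstein estimate on a single entry is correct, but its input $\|A_{\alpha\beta}\|_{L^{2}}$ is simply not controlled by $\T(\nu)$, so the ``equivalence of norms'' you invoke at the end does not exist. The paper sidesteps this by never descending to matrix entries: it works with the continuous scalar function $\xi\mapsto\mathfrak t[\nu](\xi)=|A[\nu](\xi)|^{2}$ directly and argues, via Lebesgue differentiation together with continuity of $A[\nu]$, that the value $\mathfrak t[\nu](0)$ is bounded in terms of the global integral $\T(\nu)\le C$. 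To close your argument you would need to replace the componentwise step by a mechanism of this sort, one that stays at the level of the scalar quantity $|A[\nu](\xi)|$ throughout.
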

\begin{proof}
	Let us first note that~$\mathfrak{t}[\nu] \in L^1_{\textup{loc}}(\scrM)$ whenever~$d\nu \in \ndm$ satisfies~\eqref{(boundedness constraint)}. In analogy to~\cite[Section~3.4]{folland}, for every~$f \in L^1_{\textup{loc}}(\scrM)$ 
	we then introduce 
	\begin{align*}
	G_r \? f(x) := \frac{1}{\mu(B_r(x))} \int_{B_r(x)} f(y) \: d\mu(y) \qquad \text{for all~$x \in \scrM$ and~$r > 0$} \:,
	\end{align*}
	and by virtue of~\cite[Theorem~3.18]{folland} we know that
	\begin{align*}
	\lim_{r \to 0} G_r\? f(x) =f(x) \qquad \text{for almost every~$x \in \scrM$} \:. 
	\end{align*}
	Since~$\mu(B_{\varepsilon}(x)) > 0$ for every~$x \in \scrM$ and arbitrary~$\varepsilon > 0$, continuity of~$A[\nu]$ yields the existence of~$x_0 \in \scrM$ such that 
	\begin{align}\label{(C)}
	|A[\nu](0)|^2 < \mathfrak{t}[\nu](x_0) + 1 = \lim_{\varepsilon \searrow 0} \int_{B_{\varepsilon}(x_0)} |A[\nu](\xi)|^2 \: d\mu(\xi) + 1 \le C +1 \:. 
	\end{align}
	
	We now apply~\eqref{(C)} in order to prove that~$|\nu(\hat{K})| < f$ for some constant~$f > 0$. To this end, we essentially employ~\cite[Lemma~4.4]{continuum}. More precisely, for any negative definite measure~$d\nu$ and arbitrary~$\varepsilon > 0$, there is a unitary operator~$U \in \LL(V)$ (with respect to~$\prec. \mid . \succ$) such that
	\begin{align*}
	U \? \nu(\hat{K}) \? U^{-1} = - \diag(\tilde{\lambda}_1, \ldots, \tilde{\lambda}_{2n}) + \Delta \nu(\hat{K}) \:,
	\end{align*}
	where the real parameters~$\tilde{\lambda}_i$ ($i=1, \ldots, 2n$) are ordered according to~\cite[eq.~(2.6)]{continuum}, and~$\|\Delta \nu(\hat{K})\| < \varepsilon$. Denoting by~$\{., . \}$ the anti-commutator, we thus obtain 
	\begin{align*}
	U \? A[\nu](0) \? U^{-1} &= \big(U \? \nu(\hat{K}) \? U^{-1}\big)^2 \\
	&= \diag\big(\tilde{\lambda}_1^2, \ldots, \tilde{\lambda}_{2n}^2 \big) - \left\{ \diag(\tilde{\lambda}_1, \ldots, \tilde{\lambda}_{2n}), \Delta \nu(\hat{K}) \right\} + \Delta \nu(\hat{K})^2 \:. 
	\end{align*}
	Since~$\|\nu(\hat{K})\| < \infty$, the absolute values of~$\tilde{\lambda}_i$ are bounded for all~$i = 1, \ldots, 2n$; from this we conclude that the spectrum of~$\diag\big(\tilde{\lambda}_1^2, \ldots, \tilde{\lambda}_{2n}^2 \big)$ coincides with the spectrum of $A[\nu](0)$, up to an arbitrarily small error term (where we applied the fact that the spectra of $A[\nu](0)$ and $U \? A[\nu](0) \? U^{-1}$ coincide according to Lemma~\ref{Lemma spectrum}). In a similar fashion, one can show that the spectra of~$\nu(\hat{K})$ and~$- \diag(\tilde{\lambda}_1, \ldots, \tilde{\lambda}_{2n})$ coincide, up to an arbitrarily small error term. Neglecting the error terms in what follows, we thus can arrange that
	\begin{align*}
	|\nu(\hat{K})| \le 2 \sum_{i=1}^{2n} |\tilde{\lambda}_i| \qquad \text{and} \qquad \sum_{i=1}^{2n} \tilde{\lambda}_i^2 \le 2 |A[\nu](0)| \:. 
	\end{align*}
	Employing Jensen's inequality, we conclude that
	\begin{align*}
	|\nu(\hat{K})|^2 \le 4 \left(\sum_{i=1}^{2n} |\tilde{\lambda}_i|\right)^2 \le 8n \sum_{i=1}^{2n} |\tilde{\lambda}_i|^2 \le 16n |A[\nu](0)| \:. 
	\end{align*}
	Applying~\eqref{(C)}, the boundedness constraint gives rise to the desired estimate
	\begin{align*}
	|\nu(\hat{K})| < 4 \? \sqrt{n \? (C+1)} =: f \:, 
	\end{align*}
	which completes the proof. 
\end{proof}

This allows us to prove Theorem~\ref{Theorem boundedness}: 

\begin{proof}[Proof of Theorem~\ref{Theorem boundedness}] 
	We basically combine Proposition~\ref{Proposition boundedness} and Theorem~\ref{Theorem minimizer}.  
	To this end let~$(d\nu^{(j)})_{j \in \N}$ be a minimizing sequence of negative definite measures
	which satisfies
	the side conditions~\eqref{(c)} and~\eqref{(boundedness constraint)} for some positive constants~$c, C > 0$. Then by Proposition~\ref{Proposition boundedness}, there exists~$f > 0$ in such a way that condition~\eqref{(swf)} is satisfied for every~$j \in \N$. As a consequence, 
	according to Theorem~\ref{Theorem minimizer}, there is a sequence of unitary operators~$(U_j)_{j \in \N}$ in~$\LL(V)$ (with respect to~$\prec . \mid . \succ$) such that the sequence~$(U_j \? d\nu^{(j)} \? U_j^{-1})_{j \in \N}$ contains a subsequence (which for simplicity we again denote by~$(U_j \? d\nu^{(j)} \? U_j^{-1})_{j \in \N}$) with the property that it converges weakly to some limit measure~$d\nu \in \ndm$. Applying Fatou's lemma one can show that
	\begin{align*}
	\Sact(\nu) \le \liminf_{j \to \infty} \Sact(\nu^{(j)}) \qquad \text{and} \qquad \T(\nu) \le \liminf_{j \to \infty} \T(\nu^{(j)}) \:. 
	\end{align*}
	By virtue of Proposition~\ref{Proposition Tr} we conclude that~$d\nu$ satisfies condition~\eqref{(c)}, thus implying that~$d\nu \not= 0$ is non-zero. This completes the proof. 
\end{proof}

Thus for compact subsets of momentum space, Theorem~\ref{Theorem boundedness} gives an alternative proof of~\cite[Theorem~4.2]{continuum}.

\appendix

\section{Justifying the Side Conditions}\label{Appendix justification}
This appendix is devoted to justify and explain the side conditions~\eqref{(c)}--\eqref{(swf)}. 
Apart from excluding trivial minimizers in a quite simple way, the following reasoning provides a strong argument for imposing condition~\eqref{(c)}.  
Given a causal fermion system~$(\H, \F, d\rho)$, the so-called local trace~$\tr(x)$ defined by
\begin{align*}
\tr(x) = \Tr_{S_x}\big(P(x,x) \big) \qquad \text{for all~$x \in \supp d\rho$} 
\end{align*}
is constant on~$\supp d\rho$ whenever the measure~$d\rho$ is a minimizer of the causal action principle (for details see~\cite[\S 1.1.3, Proposition~1.4.1 and Section~2.5]{cfs}). 
Considering homogeneous causal fermion systems, this suggests to impose that 
\begin{align*}
\Tr_V(\nu(\hat{K})) = \Tr_V \left(\int_{\hat{K}} d\nu(k)\right) = \Tr_V(P(0)) = \Tr_V(P(x,x)) = c \qquad \text{for all~$x \in \scrM$} \:, 
\end{align*}
thus motivating the side condition~\eqref{(c)}. Following the arguments in~\cite[\S 1.4.1]{cfs}, we shall always assume that~$c \not= 0$, thereby excluding trivial minimizers. 
Let us briefly explain why the quantity~$\Tr_V(P(0)) = \Tr_V(\nu(\hat{K}))$ in~\eqref{(c)} is also referred to as \emph{mass density}.\footnote{Note that the quantity~$\Tr_V(P(0)) = \Tr_V(\nu(\hat{K}))$ coincides with the \emph{local particle density}~$f_{\textup{loc}}$ as introduced in~\cite[eq.~(4.4)]{continuum}. In order to avoid confusion, this notion will not be used in what follows.} In order to see that~$\Tr_V(P(0))$ can indeed be regarded as a density, let us assume that~$(\H, \F, d\rho)$ is a causal fermion system. Whenever~$P^{\varepsilon}(x,y)$ is a regularization of the kernel of the fermionic projector of the vacuum~$P(x,y)$ with regularization length~$\varepsilon$ (where~$P(x,y)$ coincides with~\eqref{(Dirac sea)}, cf.~\cite[eq.~(1.2.23)]{cfs}), its trace is given by (see~\cite[eq.~(2.5.1)]{cfs}) 
\begin{align*}
\Tr_{S_x} \big(P^{\varepsilon}(x,x) \big) \sim \frac{m}{\varepsilon^2} \qquad \text{for all~$x \in \supp d\rho$} \:. 
\end{align*}
Making use of the fact that the unit of mass equals one over length, we conclude that~$\Tr_{S_x} \big(P^{\varepsilon}(x,x) \big)$ is a density, which apparently is proportional to the mass~$m$. Carrying these observations over to~$\Tr_V (P(0))$ in the homogeneous case justifies the terminology of mass density.

A possible explanation for introducing the constraint~\eqref{(swf)} is that a similar side condition for the closed chain is imposed in the existence theorem~\cite[Theorem~6.1]{discrete}. Since the fermionic projector~$P(0) = \nu(\hat{K})$ can be diagonalized (up to an arbitrarily small error term) according to~\cite[Lemma~4.4]{continuum}, 
in order to develop the existence theory of minimizers in the homogeneous setting 
it seems promising to demand that constraint~\eqref{(swf)} is satisfied. On the other hand, following the original ideas in~\cite{pfp} and its modifications in~\cite{continuum}, it is natural to impose a boundedness constraint~\eqref{(boundedness constraint)}. The arguments in~\S \ref{S boundedness constraint} show that~\eqref{(boundedness constraint)} already implies condition~\eqref{(swf)}.

Let us finally discuss the remaining side condition~\eqref{(f)}. Since working with the spectral weight as appearing in the constraint~\eqref{(swf)} may be awkward, it might seem preferable to work with a similar condition which is more easy to handle. Bearing in mind that the operator~$\nu(\hat{K})$ may be diagonalized (up to an arbitrarily small error term) in virtue of~\cite[Lemma~4.4]{continuum} in such a way that its diagonal entries are ordered according to~\cite[eq.~(2.6)]{continuum}, the specific form of the signature matrix~$S$ (see~\eqref{(S)}) suggests to replace condition~\eqref{(swf)} by~\eqref{(f)},
\begin{align*}
\Tr_V(-S\nu(\hat{K})) = f \:.
\end{align*}
The same arguments as before illustrate that~$\Tr_V(-S\nu(\hat{K}))$ is a density; we refer to this quantity as \emph{particle density}.

\Thanks {{\em{Acknowledgments:}}
	I would like to thank Felix Finster for valuable comments on the manuscript. 
	%Moreover, 
	I gratefully acknowledge financial support by the ``Studienstiftung des deutschen Volkes.'' 
		
%\bibliographystyle{amsplain}
%\bibliography{../felix}

%	\bibliographystyle{amsplain}
%	\bibliography{felix}	

\begin{thebibliography}{10}
	\bibliographystyle{amsplain}
	
	\bibitem{link}
	\emph{Link to web platform on causal fermion systems: www.causal-fermion-system.com}
	
	\bibitem{alt}
	H.W. Alt, \emph{Lineare {F}unktionalanalysis. {E}ine anwendungsorientierte {E}inf{\"u}hrung}, {F}ünfte, überarbeitete {A}uflage, Springer-Verlag, 2006.

	\bibitem{bogachev}
	V.I. Bogachev, \emph{Measure {T}heory. {V}ol. {I}, {II}}, Springer-Verlag, Berlin, 2007. 
	
	\bibitem{bognar}
	J. Bogn\'{a}r, J., \emph{Indefinite {I}nner {P}roduct {S}paces}, Springer-Verlag, New York-Heidelberg, 1974, Ergebnisse der Mathematik und ihrer Grenzgebiete, Band 78. 
	
	\bibitem{diestel+uhl}
	J. Diestel, J. and J.J. Uhl, Jr., \emph{Vector {M}easures}, American Mathematical Society, Providence, R.I., 1977, With a foreword by B. J. Pettis, Mathematical Surveys, No. 15. 
	
	\bibitem{dirac1930theory}
	P.A.M. Dirac, \emph{A theory of electrons and protons}, Proceedings of the Royal Society of London. Series A, Containing papers of a mathematical and physical character \textbf{126} (1930), no. 801, 360--365. 
	
	\bibitem{dirac1934discussion}
	\bysame, \emph{Discussion of the infinite distribution of electrons in the theory of the positron}, Mathematical Proceedings of the Cambridge Philosophical Society, vol. 30, Cambridge University Press, 1934, pp. 150--163. 
	
	\bibitem{elstrodt}
	J. Elstrodt, \emph{Ma\ss - und {I}ntegrationstheorie}, fourth ed., Springer-Lehrbuch, Springer-Verlag, Berlin, 2005, Grundwissen Mathematik. 

	\bibitem{finster1996ableitung}
	F. Finster, \emph{Ableitung von {F}eldgleichungen aus dem {P}rinzip des fermionischen {P}rojektors}, arXiv preprint gr-qc/9606040 (1996). 
	
	\bibitem{external}
	\bysame, \emph{Definition of the {D}irac sea in the presence of external fields}, Adv. Theor. Math. Phys. \textbf{2} (1998), no.~5, 963--985. 
	
	\bibitem{pfp}
	\bysame, \emph{The {P}rinciple of the {F}ermionic {P}rojector}, AMS/IP Studies in Advanced Mathematics, vol.~35, American Mathematical Society, Providence, RI; International Press, Somerville, MA, 2006. 
	
	\bibitem{discrete}
	\bysame, \emph{A variational principle in discrete space-time: Existence of minimizers}, Calc. Var. Partial Differential Equations \textbf{29} (2007), no.~4, 431--453. 

	\bibitem{continuum}
	\bysame, \emph{Causal variational principles on measure spaces}, J. Reine Angew. Math. \textbf{646} (2010), 141--194. 
	
	\bibitem{finster2011formulation}
	\bysame, \emph{A formulation of quantum field theory realizing a sea of interacting {D}irac particles}, Letters in Mathematical Physics \textbf{97} (2011), no. 2, 165--183. 
	
	\bibitem{cfs}
	\bysame, \emph{The {C}ontinuum {L}imit of {C}ausal {F}ermion {S}ystems}, Fundamental Theories of Physics, vol.~186, Springer, 2016, From Planck scale structures to macroscopic physics. 
	
	\bibitem{finster+grotz}
	F. Finster and A. Grotz, \emph{A {L}orentzian quantum geometry}, Advances in Theoretical and Mathematical Physics \textbf{16} (2012), no. 4, 1197--1290. 

	\bibitem{review}
	F. Finster and M. Jokel, \emph{Causal fermion systems: An elementary introduction to physical ideas and mathematical concepts}, Progress and Visions in Quantum Theory in View of Gravity (2020), 63--92. 
	
	\bibitem{dice2014} 
	F. Finster and J. Kleiner, \emph{Causal fermion systems as a candidate for a unified physical theory}, Journal of Physics: Conference Series \textbf{626} (2015), 012020. 
	
	\bibitem{jet} 
	\bysame, \emph{A {H}amiltonian formulation of causal variational principles}, Calc. Var. Partial Differential Equations \textbf{56} (2017), no.~3, Paper No. 73, 33. 

	\bibitem{folland}
	G.B. Folland, \emph{Real {A}nalysis: Modern {T}echniques and {T}heir {A}pplications}, Pure and Applied Mathematics (New York), Second, John Wiley \& Sons, Inc., New York, 1999. 
	
	\bibitem{folland-qft}
	\bysame \emph{Quantum {F}ield {T}heory: A tourist guide for mathematicians}, Mathematical Surveys and Monographs, vol. 149, American Mathematical Society, Providence, RI, 2008. 
	
	\bibitem{gardner+pfeffer}
	R.J. Gardner and W.F. Pfeffer, \emph{Borel measures}, Handbook of set-theoretic topology (1984), 961--1043. 
	
	\bibitem{gohberg}
	I. Gohberg and P. Lancaster and L. Rodman, \emph{Indefinite {L}inear {A}lgebra and {A}pplications}, Birkh\"{a}user Verlag, Basel, 2005. 
	
	\bibitem{halmosmt}
	P.R. Halmos, \emph{Measure {T}heory}, D. Van Nostrand Company, Inc., New York, N. Y., 1950. 
	
	\bibitem{jost-postmodern}
	J. Jost, \emph{Postmodern {A}nalysis}, third ed., Universitext, Springer-Verlag, Berlin, 2005. 
	
	\bibitem{kato} 
	T. Kato, \emph{Perturbation {T}heory for {L}inear {O}perators}, Classics in Mathematics, Springer-Verlag, Berlin, 1995, Reprint of the 1980 edition. 
	
	\bibitem{langerma}
	C. Langer, \emph{Homogeneous {C}ausal {F}ermion {S}ystems}, Master thesis, Universität Regensburg, 2017 
	
	\bibitem{langer}
	H. Langer, \emph{Zur {S}pektraltheorie {$J$}-selbstadjungierter {O}peratoren}, Math. Ann. \textbf{146} (1962), 60--85. 
	
	\bibitem{lax}
	P.D. Lax, \emph{Functional {A}nalysis}, Pure and Applied Mathematics (New York), Wiley-Interscience [John Wiley \& Sons], New York, 2002. 

	\bibitem{oppiocfs} 
	M. Oppio, \emph{On the mathematical foundations of causal fermion systems in {M}inkowski spacetime}, arXiv preprint arXiv:1909.09229 (2019). 
	
	\bibitem{reed+simon} 
	M. Reed and B. Simon, \emph{Methods of {M}odern {M}athematical {P}hysics {I}. {F}unctional analysis}, Academic Press, New York-London, 1972. 
	
	\bibitem{rudin}
	W. Rudin, \emph{Real and {C}omplex {A}nalysis}, third ed., McGraw-Hill Book Co., New York, 1987. 

	\bibitem{zeidlerIII}
	E. Zeidler, \emph{Nonlinear {F}unctional {A}nalysis and its {A}pplications {III}}, New York, 1985, Variational methods and optimization, Translated from the German by Leo F. Boron. 
	
\end{thebibliography}
\providecommand{\bysame}{\leavevmode\hbox to3em{\hrulefill}\thinspace}
\providecommand{\MR}{\relax\ifhmode\unskip\space\fi MR }
% \MRhref is called by the amsart/book/proc definition of \MR.
\providecommand{\MRhref}[2]{%
	\href{http://www.ams.org/mathscinet-getitem?mr=#1}{#2}
}

\providecommand{\href}[2]{#2}

\end{document}